\providecommand{\U}[1]{\protect\rule{.1in}{.1in}}
\newtheorem{theorem}{Theorem}
\newtheorem{corollary}[theorem]{Corollary}
\newtheorem{definition}[theorem]{Definition}
\newtheorem{example}[theorem]{Example}
\newtheorem{notation}[theorem]{Notation}
\newtheorem{remark}[theorem]{Remark}
\newenvironment{proof}[1][Proof]{\noindent\textbf{#1.} }{\ \rule{0.5em}{0.5em}}
\newdimen\dummy
\begin{document}

\title{On the basins of attraction of the regular autonomous asynchronous systems}
\author{Serban E. Vlad\\Str. Zimbrului, Nr. 3, Bl. PB68, Ap. 11, 410430, Oradea, Romania, mail:
serban\_e\_vlad@yahoo.com, web page: www.serbanvlad.ro}
\maketitle

\begin{abstract}
The Boolean autonomous dynamical systems, also called regular autonomous
asynchronous systems are systems whose `vector field' is a function
$\Phi:\{0,1\}^{n}\rightarrow\{0,1\}^{n}$ and time is discrete or continuous.
While the synchronous systems have their coordinate functions $\Phi
_{1},...,\Phi_{n}$ computed at the same time: $\Phi,\Phi\circ\Phi,\Phi
\circ\Phi\circ\Phi,...$ the asynchronous systems have $\Phi_{1},...,\Phi_{n}$
computed independently on each other. The purpose of the paper is that of
studying the basins of attraction of the fixed points, of the orbits and of
the $\omega$-limit sets of the regular autonomous asynchronous systems, by
continuing the study started in \cite{bib8}. The bibliography consists in analogies.

\end{abstract}

\textbf{MSC}: 94C10

\textbf{keywords}: asynchronous system, $\omega-$limit set, invariance, basin
of attraction

\section{Introduction}

The $\mathbf{R}\rightarrow\{0,1\}$ functions model the digital electrical
signals and they are not studied in literature. An asynchronous circuit
without input, considered as a collection of $n$ signals, should be
deterministically modelled by a function $x:\mathbf{R}\rightarrow\{0,1\}^{n}$
called state. Several parameters related with the asynchronous circuit are
either unknown, or perhaps variable or simply ignored in modeling: the
temperature, the tension of the mains, the delays the occur in the computation
of the Boolean functions etc. For this reason, instead of a function $x$ we
have in general a set $X$ of functions $x,$ called state space or autonomous
system, where each $x$ represents a possibility of modeling the circuit. When
$X$ is constructed by making use of a 'vector field' $\Phi:\{0,1\}^{n}%
\rightarrow\{0,1\}^{n},$ the system $X$ is called regular. The universal
regular autonomous asynchronous systems are the Boolean dynamical systems and
they are identified with $\Phi$.

The dynamics of these systems is described by the so called state portraits.
We give the example of the function $\Phi:\{0,1\}^{2}\rightarrow\{0,1\}^{2}$
that is defined by Table 1, where $\mu=(\mu_{1},\mu_{2})\in\{0,1\}^{2}:$%
\begin{align*}
&
\begin{array}
[c]{cc}%
(\mu_{1},\mu_{2}) & (\Phi_{1}(\mu_{1},\mu_{2}),\Phi_{2}(\mu_{1},\mu_{2}))\\
(0,0) & (1,1)\\
(0,1) & (1,1)\\
(1,0) & (1,0)\\
(1,1) & (0,1)
\end{array}
\\
&  \;\;\;\;\;\;\;\;\;\;\;\;\;Table\;1
\end{align*}
The state portrait of $\Phi$ was drawn in Figure \ref{attractive3}
\begin{figure}
[ptb]
\begin{center}
\fbox{\includegraphics[
height=0.9003in,
width=1.3534in
]%
{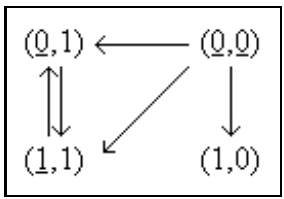}%
}\caption{Example of state portrait}%
\label{attractive3}%
\end{center}
\end{figure}
where the arrows show the increase of time. The coordinates $\mu_{i}%
,i\in\{1,2\}$ are underlined if $\Phi_{i}(\mu_{1},\mu_{2})\neq\mu_{i}$ and
they are called unstable, or enabled, or excited. These are the coordinates
that are about to change their value. The coordinates $\mu_{i}$ that are not
underlined satisfy by definition $\Phi_{i}(\mu_{1},\mu_{2})=\mu_{i}$ and they
are called stable, or disabled, or not excited. These are the coordinates that
cannot change their value. Three arrows start from the point $(0,0)$ where
both coordinates are unstable, showing the fact that $\Phi_{1}(0,0)$ may be
computed first, $\Phi_{2}(0,0)$ may be computed first or $\Phi_{1}%
(0,0),\Phi_{2}(0,0)$ may be computed simultaneously. Note that the system was
identified with the function $\Phi$.

The existence of several possibilities of evolution of the system (three
possibilities in $(0,0)$) is the key characteristic of asynchronicity, as
opposed to synchronicity where the coordinates $\Phi_{i}(\mu)$ are always
computed simultaneously, $i\in\{1,...,n\}$ for all $\mu\in\{0,1\}^{n}$ and the
system's run is: $\mu,\Phi(\mu),(\Phi\circ\Phi)(\mu),...,(\Phi\circ
...\circ\Phi)(\mu),...$

The purpose of the paper is that of defining in the asynchronous systems
theory, by following analogies, the basins of attraction of the fixed points
and of the orbits from the dynamical systems theory. We shall also define the
basins of attraction of the $\omega-$limit sets. The paper continues the study
of the basins of attraction that was started in \cite{bib8} and many
introductory issues are taken from that work.

\section{Preliminaries}

\begin{notation}
The set $\mathbf{B}=\{0,1\}$ is the binary Boole algebra, endowed with the
usual algebraical laws%
\begin{align*}
&
\begin{array}
[c]{cc}
& \overline{\ \ }\\
0 & 1\\
1 & 0
\end{array}
,\;\;\;\;\;\;%
\begin{array}
[c]{ccc}%
\cdot & 0 & 1\\
0 & 0 & 0\\
1 & 0 & 1
\end{array}
,\;\;\;\;\;\;%
\begin{array}
[c]{ccc}%
\cup & 0 & 1\\
0 & 0 & 1\\
1 & 1 & 1
\end{array}
,\;\;\;\;\;\;%
\begin{array}
[c]{ccc}%
\oplus & 0 & 1\\
0 & 0 & 1\\
1 & 1 & 0
\end{array}
\\
&  \;\;\;\;\;\;\;\;\;\;\;\;\;\;\;\;\;\;\ \ \ \ \ Table\;2
\end{align*}
and with the discrete topology.
\end{notation}

\begin{definition}
The sequence $\alpha:\mathbf{N}\rightarrow\mathbf{B}^{n},$ usually denoted by
$\alpha^{k},k\in\mathbf{N},$ is called \textbf{progressive} if the sets%
\[
\{k|k\in\mathbf{N},\alpha_{i}^{k}=1\}
\]
are infinite for all $i\in\{1,...,n\}.$ We denote the set of the progressive
sequences by $\Pi_{n}.$
\end{definition}

\begin{definition}
For the function $\Phi:\mathbf{B}^{n}\rightarrow\mathbf{B}^{n}$ and $\nu
\in\mathbf{B}^{n}$ we define $\Phi^{\nu}:\mathbf{B}^{n}\rightarrow
\mathbf{B}^{n}$ by $\forall\mu\in\mathbf{B}^{n},$%
\[
\Phi^{\nu}(\mu)=(\overline{\nu_{1}}\cdot\mu_{1}\oplus\nu_{1}\cdot\Phi_{1}%
(\mu),...,\overline{\nu_{n}}\cdot\mu_{n}\oplus\nu_{n}\cdot\Phi_{n}(\mu)).
\]

\end{definition}

\begin{definition}
The functions $\Phi^{\alpha^{0}...\alpha^{k}}:\mathbf{B}^{n}\rightarrow
\mathbf{B}^{n}$ are defined for $k\in\mathbf{N}$ and $\alpha^{0}%
,...,\alpha^{k}\in\mathbf{B}^{n}$ iteratively: $\forall\mu\in\mathbf{B}^{n},$%
\[
\Phi^{\alpha^{0}...\alpha^{k}\alpha^{k+1}}(\mu)=\Phi^{\alpha^{k+1}}%
(\Phi^{\alpha^{0}...\alpha^{k}}(\mu)).
\]

\end{definition}

\begin{notation}
We denote by $\chi_{A}:\mathbf{R}\rightarrow\mathbf{B}$ the characteristic
function of the set $A\subset\mathbf{R}$: $\forall t\in\mathbf{R},$%
\[
\chi_{A}(t)=\left\{
\begin{array}
[c]{c}%
1,t\in A\\
0,t\notin A
\end{array}
\right.  .
\]

\end{notation}

\begin{notation}
We denote by $Seq$ the set of the sequences $t_{0}<t_{1}<...<t_{k}<...$ of
real numbers that are unbounded from above.
\end{notation}

\begin{definition}
The functions $\rho:\mathbf{R}\rightarrow\mathbf{B}^{n}$ of the form $\forall
t\in\mathbf{R},$%
\begin{equation}
\rho(t)=\alpha^{0}\cdot\chi_{\{t_{0}\}}(t)\oplus\alpha^{1}\cdot\chi
_{\{t_{1}\}}(t)\oplus...\oplus\alpha^{k}\cdot\chi_{\{t_{k}\}}(t)\oplus...
\label{1}%
\end{equation}
where $\alpha\in\Pi_{n}$ and $(t_{k})\in Seq$ are called \textbf{progressive}
and their set is denoted by $P_{n}.$
\end{definition}

\begin{definition}
The function $\Phi^{\rho}:\mathbf{B}^{n}\times\mathbf{R}\rightarrow
\mathbf{B}^{n}$ that is defined in the following way%
\[
\Phi^{\rho}(\mu,t)=\mu\cdot\chi_{(-\infty,t_{0})}(t)\oplus\Phi^{\alpha^{0}%
}(\mu)\cdot\chi_{\lbrack t_{0},t_{1})}(t)\oplus\Phi^{\alpha^{0}\alpha^{1}}%
(\mu)\cdot\chi_{\lbrack t_{1},t_{2})}(t)\oplus...
\]%
\[
...\oplus\Phi^{\alpha^{0}...\alpha^{k}}(\mu)\cdot\chi_{\lbrack t_{k},t_{k+1}%
)}(t)\oplus...
\]
is called \textbf{flow}, \textbf{motion} or \textbf{orbit} (of $\mu
\in\mathbf{B}^{n}$). We have supposed that $\rho\in P_{n}$ is like at (\ref{1}).
\end{definition}

\begin{definition}
The set%
\[
Or_{\rho}(\mu)=\{\Phi^{\rho}(\mu,t)|t\in\mathbf{R}\}
\]
is also called \textbf{orbit} (of $\mu\in\mathbf{B}^{n}$).
\end{definition}

\begin{remark}
The function $\Phi^{\nu}$ shows how an asynchronous iteration of $\Phi$ is
made: for any $i\in\{1,...,n\},$ if $\nu_{i}=0$ then $\Phi_{i}$ is not
computed, since $\Phi_{i}^{\nu}(\mu)=\mu_{i}$ and if $\nu_{i}=1$ then
$\Phi_{i}$ is computed, since $\Phi_{i}^{\nu}(\mu)=\Phi_{i}(\mu).$

The definition of $\Phi^{\alpha^{0}...\alpha^{k}}$ generalizes this idea to an
arbitrary number $k+1$ of asynchronous iterations, with the supplementary
request that each coordinate $\Phi_{i}$ is computed infinitely many times in
the sequence $\mu,\Phi^{\alpha^{0}}(\mu),\Phi^{\alpha^{0}\alpha^{1}}%
(\mu),...,\Phi^{\alpha^{0}...\alpha^{k}}(\mu),...$ whenever $\alpha\in\Pi
_{n}.$

The sequences $(t_{k})\in Seq$ make the pass from the discrete time
$\mathbf{N}$ to the continuous time $\mathbf{R}$ and each $\rho\in P_{n}$
shows, in addition to $\alpha\in\Pi_{n}$, the time instants $t_{k}$ when
$\Phi$ is computed (asynchronously). Thus $\Phi^{\rho}(\mu,t),t\in\mathbf{R}$
is the continuous time computation of the sequence $\mu,$ $\Phi^{\alpha^{0}%
}(\mu),$ $\Phi^{\alpha^{0}\alpha^{1}}(\mu),$ $...,$ $\Phi^{\alpha^{0}%
...\alpha^{k}}(\mu),$ $...$

When $\alpha$ runs in $\Pi_{n}$ and $(t_{k})$ runs in $Seq$ we get the
'unbounded delay model' of computation of the Boolean function $\Phi$,
represented in discrete time by the sequences $\mu,\Phi^{\alpha^{0}}(\mu
),\Phi^{\alpha^{0}\alpha^{1}}(\mu),...,\Phi^{\alpha^{0}...\alpha^{k}}%
(\mu),...$ and in continuous time by the orbits $\Phi^{\rho}(\mu,t)$
respectively. We shall not insist on the non-formalized way that the engineers
describe this model; we just mention that the 'unbounded delay model' is a
reasonable way of starting the analysis of a circuit in which the delays
occurring in the computation of the Boolean functions $\Phi$ are arbitrary
positive numbers. If we restrict suitably the ranges of $\alpha$ and $(t_{k})$
we get the 'bounded delay model' of computation of $\Phi$ and if both $\alpha
$, $(t_{k})$ are fixed, then we obtain the 'fixed delay model' of computation
of $\Phi,$ determinism.
\end{remark}

\begin{theorem}
\label{The10}\cite{bib8} Let $\alpha\in\Pi_{n},(t_{k})\in Seq$ be arbitrary
and the function%
\[
\rho(t)=\alpha^{0}\cdot\chi_{\{t_{0}\}}(t)\oplus\alpha^{1}\cdot\chi
_{\{t_{1}\}}(t)\oplus...\oplus\alpha^{k}\cdot\chi_{\{t_{k}\}}(t)\oplus...,
\]
$\rho\in P_{n}.$ The following statements are true:

a) $\{\alpha^{k}|k\geq k_{1}\}\in\Pi_{n}$ for any $k_{1}\in\mathbf{N};$

b) $(t_{k})\cap(t^{\prime},\infty)\in Seq$ for any $t^{\prime}\in\mathbf{R};$

c) $\rho\cdot\chi_{(t^{\prime},\infty)}\in P_{n}$ for any $t^{\prime}%
\in\mathbf{R};$

d) $\forall\mu\in\mathbf{B}^{n},\forall\mu^{\prime}\in\mathbf{B}^{n},\forall
t^{\prime}\in\mathbf{R},$%
\[
\Phi^{\rho}(\mu,t^{\prime})=\mu^{\prime}\Longrightarrow\forall t\geq
t^{\prime},\Phi^{\rho}(\mu,t)=\Phi^{\rho\cdot\chi_{(t^{\prime},\infty)}}%
(\mu^{\prime},t).
\]

\end{theorem}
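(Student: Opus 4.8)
The plan is to dispose of parts (a)--(c) as short preparatory facts and then to obtain (d), the substantive ``flow/semigroup'' statement, from them together with a concatenation identity for the iterated maps $\Phi^{\alpha^{0}...\alpha^{k}}$. For (a), deleting the finitely many indices $0,...,k_{1}-1$ from each set $\{k\mid\alpha_{i}^{k}=1\}$ leaves an infinite set, so $\{\alpha^{k}\mid k\geq k_{1}\}\in\Pi_{n}$. For (b), since $(t_{k})$ is unbounded from above only finitely many of its terms lie in $(-\infty,t^{\prime}]$; removing them preserves strict monotonicity and unboundedness, so $(t_{k})\cap(t^{\prime},\infty)\in Seq$. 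Part (c) assembles these: writing $m$ for the least index with $t_{m}>t^{\prime}$, the nonzero values of $\rho\cdot\chi_{(t^{\prime},\infty)}$ occur exactly at $t_{m},t_{m+1},...$ with values $\alpha^{m},\alpha^{m+1},...$, so $\rho\cdot\chi_{(t^{\prime},\infty)}$ has the canonical form (\ref{1}) with data $\alpha^{\prime j}=\alpha^{m+j}$ and $t_{j}^{\prime}=t_{m+j}$, which are progressive by (a) and lie in $Seq$ by (b); hence $\rho\cdot\chi_{(t^{\prime},\infty)}\in P_{n}$.

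For (d) I would first record the index $m$ and the shifted canonical data of $\rho^{\prime}=\rho\cdot\chi_{(t^{\prime},\infty)}$ found in (c), then compute $\mu^{\prime}=\Phi^{\rho}(\mu,t^{\prime})$ by locating $t^{\prime}$: if $m\geq1$ then $t^{\prime}\in\lbrack t_{m-1},t_{m})$ and $\mu^{\prime}=\Phi^{\alpha^{0}...\alpha^{m-1}}(\mu)$, whereas if $m=0$ then $t^{\prime}<t_{0}$ and $\mu^{\prime}=\mu$ (in which case the canonical data of $\rho^{\prime}$ coincide with those of $\rho$ and (d) is immediate). The conceptual core is the concatenation identity
\[
\Phi^{\alpha^{0}...\alpha^{m+j}}=\Phi^{\alpha^{m}...\alpha^{m+j}}\circ\Phi^{\alpha^{0}...\alpha^{m-1}}\qquad(j\geq0),
\]
which follows by an immediate induction on $j$ from the iterative rule $\Phi^{\alpha^{0}...\alpha^{k+1}}=\Phi^{\alpha^{k+1}}\circ\Phi^{\alpha^{0}...\alpha^{k}}$ of the defining recursion.

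Finally I would compare the two orbits interval by interval for $t\geq t^{\prime}$. On the initial stretch $t^{\prime}\leq t<t_{m}$ both sides equal $\mu^{\prime}$: the right-hand side because $t<t_{0}^{\prime}=t_{m}$, the left-hand side because $t$ lies in $[t_{m-1},t_{m})$ (or $t<t_{0}$ when $m=0$). On each later interval $t\in\lbrack t_{m+j},t_{m+j+1})$ the right-hand side is $\Phi^{\alpha^{m}...\alpha^{m+j}}(\mu^{\prime})$ and the left-hand side is $\Phi^{\alpha^{0}...\alpha^{m+j}}(\mu)$, and these agree by the concatenation identity together with $\mu^{\prime}=\Phi^{\alpha^{0}...\alpha^{m-1}}(\mu)$.

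The step I expect to require the most care is not any single deduction but the bookkeeping that ties them together: pinning down the shift index $m$, writing the canonical representation of $\rho^{\prime}$ with the shifted data, and matching the half-open intervals $[t_{k},t_{k+1})$ on the two sides, including the edge cases $m=0$ and $t^{\prime}$ coinciding with some $t_{k}$. Once the intervals are correctly aligned, the concatenation identity does the real work and the equality is forced termwise in the $\oplus$-expansion of $\Phi^{\rho}$.
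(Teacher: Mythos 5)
Your proof is correct; note, however, that this paper does not actually prove Theorem \ref{The10} at all --- it imports the statement from \cite{bib8} as a preliminary, so there is no in-paper argument to compare against. Your route is the natural definitional one: parts a)--c) by discarding finitely many indices/terms and re-indexing the canonical data of $\rho\cdot\chi_{(t^{\prime},\infty)}$, and part d) by locating $t^{\prime}$ via the least index $m$ with $t_{m}>t^{\prime}$, proving the concatenation identity $\Phi^{\alpha^{0}...\alpha^{m+j}}=\Phi^{\alpha^{m}...\alpha^{m+j}}\circ\Phi^{\alpha^{0}...\alpha^{m-1}}$ by induction from the defining recursion, and matching the two flows interval by interval on $[t^{\prime},t_{m})$ and $[t_{m+j},t_{m+j+1})$, with the edge cases $m=0$ and $t^{\prime}=t_{k}$ handled correctly.
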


\begin{notation}
For any $d\in\mathbf{R,}$ we denote with $\tau^{d}:\mathbf{R}\rightarrow
\mathbf{R}$ the translation $\forall t\in\mathbf{R},\tau^{d}(t)=t-d.$
\end{notation}

\begin{theorem}
\label{The11}\cite{bib8} Let be $\mu\in\mathbf{B}^{n},\rho\in P_{n}$ and
$d\in\mathbf{R}.$ The function $\rho\circ\tau^{d}$ is progressive and we have%
\[
\Phi^{\rho\circ\tau^{d}}(\mu,t)=\Phi^{\rho}(\mu,t-d).
\]

\end{theorem}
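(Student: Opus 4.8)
The plan is to make the translation explicit at the level of the defining data $(\alpha,(t_{k}))$ of $\rho$ and then to compare the two flows term by term. First I would write out $(\rho\circ\tau^{d})(t)=\rho(t-d)$ using the representation (\ref{1}) of $\rho$ together with the elementary identity $\chi_{\{t_{k}\}}(t-d)=\chi_{\{t_{k}+d\}}(t)$, which holds because $t-d=t_{k}\Leftrightarrow t=t_{k}+d$. This yields
\[
(\rho\circ\tau^{d})(t)=\alpha^{0}\cdot\chi_{\{t_{0}+d\}}(t)\oplus\alpha^{1}\cdot\chi_{\{t_{1}+d\}}(t)\oplus...\oplus\alpha^{k}\cdot\chi_{\{t_{k}+d\}}(t)\oplus...,
\]
exhibiting $\rho\circ\tau^{d}$ as a function of the form (\ref{1}) with the \emph{same} progressive sequence $\alpha\in\Pi_{n}$ and with the time sequence $(t_{k}+d)$ in place of $(t_{k})$.

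Progressiveness of $\rho\circ\tau^{d}$ then follows immediately: since $\alpha$ is unchanged, the sets $\{k\mid\alpha_{i}^{k}=1\}$ remain infinite for each $i\in\{1,...,n\}$, and $(t_{k}+d)$ is still strictly increasing and unbounded from above, so $(t_{k}+d)\in Seq$. Hence $\rho\circ\tau^{d}\in P_{n}$, which is the first assertion.

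For the flow identity I would invoke the general translation rule $\chi_{I}(t-d)=\chi_{I+d}(t)$ valid for every $I\subset\mathbf{R}$, where $I+d=\{s+d\mid s\in I\}$; this is immediate from $t-d\in I\Leftrightarrow t\in I+d$. Applied to the intervals occurring in the definition of the flow it gives $\chi_{(-\infty,t_{0})}(t-d)=\chi_{(-\infty,t_{0}+d)}(t)$ and $\chi_{[t_{k},t_{k+1})}(t-d)=\chi_{[t_{k}+d,t_{k+1}+d)}(t)$. Because the coefficients $\mu,\Phi^{\alpha^{0}}(\mu),...,\Phi^{\alpha^{0}...\alpha^{k}}(\mu),...$ depend only on $\mu$ and on $\alpha$, both of which are untouched by the shift, substituting $t-d$ for $t$ in the definition of $\Phi^{\rho}$ reproduces precisely the definition of $\Phi^{\rho\circ\tau^{d}}$ evaluated at $t$ (the flow built from $\alpha$ and the shifted sequence $(t_{k}+d)$). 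Matching the two expansions term by term gives $\Phi^{\rho\circ\tau^{d}}(\mu,t)=\Phi^{\rho}(\mu,t-d)$.

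I do not expect a genuine obstacle here, as the argument is essentially bookkeeping. The only points requiring a little care are checking that the shifted sequence $(t_{k}+d)$ still lies in $Seq$ (strict monotonicity and unboundedness from above are both preserved by a global translation) and keeping the index alignment correct when the shift is pushed through the intervals, so that the $k$-th summand of one flow is compared with the $k$-th summand of the other.
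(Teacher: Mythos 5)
Your proof is correct. Note that the paper itself offers no proof of Theorem \ref{The11} to compare against: it is quoted from \cite{bib8}, so the result is imported, not derived here. Your argument --- rewriting $\rho\circ\tau^{d}$ in the form (\ref{1}) with the same $\alpha\in\Pi_{n}$ and the shifted sequence $(t_{k}+d)\in Seq$, then matching the expansions of the two flows via $\chi_{(-\infty,t_{0})}(t-d)=\chi_{(-\infty,t_{0}+d)}(t)$ and $\chi_{[t_{k},t_{k+1})}(t-d)=\chi_{[t_{k}+d,t_{k+1}+d)}(t)$ --- is exactly the natural bookkeeping proof, and it is complete: it establishes both progressiveness and the identity $\Phi^{\rho\circ\tau^{d}}(\mu,t)=\Phi^{\rho}(\mu,t-d)$, since the flow is defined precisely in terms of the representation data $(\alpha,(t_{k}))$ that your computation exhibits for $\rho\circ\tau^{d}$.
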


\begin{definition}
The \textbf{universal regular autonomous asynchronous system} that is
generated by $\Phi:\mathbf{B}^{n}\rightarrow\mathbf{B}^{n}$ is by definition%
\[
\Xi_{\Phi}=\{\Phi^{\rho}(\mu,\cdot)|\mu\in\mathbf{B}^{n},\rho\in P_{n}\};
\]
any $x(t)=\Phi^{\rho}(\mu,t)$ is called \textbf{state} (of $\Xi_{\Phi}$),
$\mu$ is called \textbf{initial value} (of $x$), or \textbf{initial state} (of
$\Xi_{\Phi}$) and $\Phi$ is called \textbf{generator function} (of $\Xi_{\Phi
}$).
\end{definition}

\begin{remark}
The asynchronous systems are non-deterministic in general, due to the
uncertainties that occur in the modeling of the asynchronous circuits.
\ Non-determinism is produced, in the case of $\Xi_{\Phi}$, by the fact that
the initial state $\mu$ and the way $\rho$ of iterating $\Phi$ are not known.

Some notes on the terminology:

- universality means the greatest in the sense of inclusion. Any $X\subset
\Xi_{\Phi}$ is a system, but we shall not study such systems in the present paper;

- regularity means the existence of a generator function $\Phi$, i.e.
analogies with the dynamical systems theory;

- autonomy means here that no input exists. We mention the fact that autonomy
has another non-equivalent definition also, a system is called autonomous if
its input set has exactly one element;

- asynchronicity refers (vaguely) to the fact that the coordinate functions
$\Phi_{1},...,\Phi_{n}$ are computed independently on each other. Its antonym
synchronicity means that the iterates of $\Phi$ are: $\Phi,\Phi\circ
\Phi,...,\Phi\circ...\circ\Phi,...$ i.e. in the sequence $\Phi^{\alpha^{0}%
},\Phi^{\alpha^{0}\alpha^{1}},...,\Phi^{\alpha^{0}...\alpha^{k}},...$ all
$\alpha^{k}$ are $(1,...,1),k\in\mathbf{N}.$ That is the discrete time of the
dynamical systems.
\end{remark}

\begin{definition}
Let $x:\mathbf{R}\rightarrow\mathbf{B}^{n}$ be some function. If
\[
\exists t^{\prime}\in\mathbf{R},\forall t\geq t^{\prime},x(t)=x(t^{\prime}),
\]
we say that \textbf{the limit} $\underset{t\rightarrow\infty}{\lim}x(t)$ (or
the \textbf{final value} of $x$) \textbf{exists} and we denote%
\[
\underset{t\rightarrow\infty}{\lim}x(t)=x(t^{\prime}).
\]

\end{definition}

\begin{theorem}
\label{The15}\cite{bib7},\cite{bib8} $\forall\mu\in\mathbf{B}^{n},\forall
\mu^{\prime}\in\mathbf{B}^{n},\forall\rho\in P_{n},$%
\[
\underset{t\rightarrow\infty}{\lim}\Phi^{\rho}(\mu,t)=\mu^{\prime
}\Longrightarrow\Phi(\mu^{\prime})=\mu^{\prime},
\]
i.e. if the final value of $\Phi^{\rho}(\mu,\cdot)$ exists, it is a fixed
point of $\Phi$.
\end{theorem}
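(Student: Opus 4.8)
The plan is to reduce the continuous-time statement to a purely discrete, coordinatewise argument and then to exploit the progressiveness of $\alpha$. First I would unpack the hypothesis: by the definition of the limit, $\underset{t\rightarrow\infty}{\lim}\Phi^{\rho}(\mu,t)=\mu^{\prime}$ means that there is a threshold $t^{\prime}\in\mathbf{R}$ with $\Phi^{\rho}(\mu,t)=\mu^{\prime}$ for every $t\geq t^{\prime}$. Since $(t_{k})\in Seq$ is unbounded from above, I can pick an index $k_{1}$ with $t_{k_{1}}\geq t^{\prime}$; then for every $k\geq k_{1}$ the value of $\Phi^{\rho}(\mu,\cdot)$ on the interval $[t_{k},t_{k+1})$, which by the definition of the flow equals $\Phi^{\alpha^{0}\ldots\alpha^{k}}(\mu)$, must be $\mu^{\prime}$. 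Hence $\Phi^{\alpha^{0}\ldots\alpha^{k}}(\mu)=\mu^{\prime}$ for all $k\geq k_{1}$.

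Next I would pass from the iterated function to a single asynchronous step. Using the iterative definition $\Phi^{\alpha^{0}\ldots\alpha^{k}\alpha^{k+1}}(\mu)=\Phi^{\alpha^{k+1}}(\Phi^{\alpha^{0}\ldots\alpha^{k}}(\mu))$, the two consecutive equalities $\Phi^{\alpha^{0}\ldots\alpha^{k}}(\mu)=\mu^{\prime}$ and $\Phi^{\alpha^{0}\ldots\alpha^{k+1}}(\mu)=\mu^{\prime}$ combine to give $\Phi^{\alpha^{k+1}}(\mu^{\prime})=\mu^{\prime}$. Thus $\Phi^{\alpha^{k}}(\mu^{\prime})=\mu^{\prime}$ holds for every $k\geq k_{1}+1$.

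Now I would read this equation coordinatewise. For a fixed index $i$, the identity $\Phi^{\alpha^{k}}(\mu^{\prime})=\mu^{\prime}$ says, by the definition of $\Phi^{\nu}$, that $\overline{\alpha_{i}^{k}}\cdot\mu_{i}^{\prime}\oplus\alpha_{i}^{k}\cdot\Phi_{i}(\mu^{\prime})=\mu_{i}^{\prime}$; when $\alpha_{i}^{k}=1$ this collapses to $\Phi_{i}(\mu^{\prime})=\mu_{i}^{\prime}$, while when $\alpha_{i}^{k}=0$ it is the trivial identity $\mu_{i}^{\prime}=\mu_{i}^{\prime}$ and carries no information. This is exactly where progressiveness enters, and I expect it to be the crux of the argument: because $\alpha\in\Pi_{n}$, the set $\{k\mid\alpha_{i}^{k}=1\}$ is infinite, so for each coordinate $i$ there exists some $k\geq k_{1}+1$ with $\alpha_{i}^{k}=1$, and that particular step forces $\Phi_{i}(\mu^{\prime})=\mu_{i}^{\prime}$. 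Letting $i$ run over $\{1,\ldots,n\}$ yields $\Phi(\mu^{\prime})=\mu^{\prime}$, which completes the proof. The one subtlety to note is that the threshold $k_{1}$ is chosen uniformly before specializing to each coordinate; this is legitimate since $k_{1}$ depends only on $t^{\prime}$ and not on $i$, and progressiveness supplies a suitable enabling index for every coordinate beyond that common threshold.
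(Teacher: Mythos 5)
Your proof is correct: unpacking the limit to get $\Phi^{\alpha^{0}\ldots\alpha^{k}}(\mu)=\mu^{\prime}$ for all large $k$, cancelling consecutive iterates to obtain $\Phi^{\alpha^{k}}(\mu^{\prime})=\mu^{\prime}$, and then invoking progressiveness coordinatewise to force $\Phi_{i}(\mu^{\prime})=\mu_{i}^{\prime}$ is exactly the argument this framework calls for. Note that the paper itself states Theorem \ref{The15} without proof (it is quoted from \cite{bib7}, \cite{bib8}), so there is no in-paper proof to diverge from; your reconstruction, including the observation that the threshold $k_{1}$ is uniform in $i$ while the enabling index supplied by $\alpha\in\Pi_{n}$ may vary with $i$, is sound and complete.
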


\begin{theorem}
\label{The16}\cite{bib7},\cite{bib8} $\forall\mu\in\mathbf{B}^{n},\forall
\mu^{\prime}\in\mathbf{B}^{n},\forall\rho\in P_{n},$%
\[
(\Phi(\mu^{\prime})=\mu^{\prime}\;and\;\exists t^{\prime}\in\mathbf{R}%
,\Phi^{\rho}(\mu,t^{\prime})=\mu^{\prime})\Longrightarrow\forall t\geq
t^{\prime},\Phi^{\rho}(\mu,t)=\mu^{\prime},
\]
meaning that if the fixed point $\mu^{\prime}$ of $\Phi$ is accessible, then
it is the final value of $\Phi^{\rho}(\mu,\cdot).$
\end{theorem}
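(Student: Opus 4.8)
The plan is to reduce the whole statement to a single observation, namely that a fixed point of $\Phi$ is absorbing for every asynchronous iteration operator $\Phi^{\nu}$, and then to propagate this fact through the flow by means of Theorem \ref{The10}.

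First I would establish the auxiliary fact that $\Phi(\mu^{\prime})=\mu^{\prime}$ implies $\Phi^{\nu}(\mu^{\prime})=\mu^{\prime}$ for every $\nu\in\mathbf{B}^{n}$. This is a coordinatewise verification in $\mathbf{B}$: by the definition of $\Phi^{\nu}$, the $i$-th coordinate is $\overline{\nu_{i}}\cdot\mu_{i}^{\prime}\oplus\nu_{i}\cdot\Phi_{i}(\mu^{\prime})$, and substituting $\Phi_{i}(\mu^{\prime})=\mu_{i}^{\prime}$ yields $\overline{\nu_{i}}\cdot\mu_{i}^{\prime}\oplus\nu_{i}\cdot\mu_{i}^{\prime}$, which equals $\mu_{i}^{\prime}$ in both cases $\mu_{i}^{\prime}=0$ and $\mu_{i}^{\prime}=1$. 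Using the iterative definition of $\Phi^{\alpha^{0}...\alpha^{k}}$, a one-line induction on $k$ then gives $\Phi^{\alpha^{0}...\alpha^{k}}(\mu^{\prime})=\mu^{\prime}$ for every $k\in\mathbf{N}$ and every choice of $\alpha^{0},...,\alpha^{k}\in\mathbf{B}^{n}$.

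Next I would invoke Theorem \ref{The10}. Since $\Phi^{\rho}(\mu,t^{\prime})=\mu^{\prime}$, part (d) gives, for all $t\geq t^{\prime}$,
\[
\Phi^{\rho}(\mu,t)=\Phi^{\rho\cdot\chi_{(t^{\prime},\infty)}}(\mu^{\prime},t),
\]
while parts (a)--(c) ensure that $\rho\cdot\chi_{(t^{\prime},\infty)}$ is again progressive, so the right-hand side is a genuine flow. Expanding this flow through its defining formula, every coefficient has the form $\Phi^{\beta^{0}...\beta^{k}}(\mu^{\prime})$ for the progressive sequence $\beta$ underlying $\rho\cdot\chi_{(t^{\prime},\infty)}$, and by the auxiliary fact each such coefficient equals $\mu^{\prime}$. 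Because the characteristic functions occurring in the expansion are supported on intervals that partition $\mathbf{R}$, their $\oplus$-sum is identically $1$, so the flow collapses to $\mu^{\prime}\cdot1=\mu^{\prime}$ at every instant. Combining this with the displayed identity yields $\Phi^{\rho}(\mu,t)=\mu^{\prime}$ for all $t\geq t^{\prime}$, as required.

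I expect no genuine obstacle. The only point demanding care is the bookkeeping in the last step, that is, checking that the $\oplus$-combination of the partition-indexed characteristic functions is identically $1$, so that the constant value $\mu^{\prime}$ is reproduced at every time. An alternative to routing through Theorem \ref{The10}(d) would be to argue directly on the expansion of $\Phi^{\rho}(\mu,\cdot)$ beyond $t^{\prime}$, but using part (d) keeps the argument short and isolates the fixed-point computation cleanly.
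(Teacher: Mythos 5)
Your proof is correct, but note that there is nothing in the paper to compare it against: Theorem \ref{The16} is stated as a preliminary quoted from \cite{bib7}, \cite{bib8}, and no proof of it appears in this paper. Judged on its own, your argument is sound and complete: the coordinatewise check that $\Phi(\mu^{\prime})=\mu^{\prime}$ forces $\Phi^{\nu}(\mu^{\prime})=\mu^{\prime}$ for every $\nu\in\mathbf{B}^{n}$ is the right absorbing-point observation; the induction extends it to all iterates $\Phi^{\alpha^{0}...\alpha^{k}}(\mu^{\prime})=\mu^{\prime}$; and Theorem \ref{The10} c), d) legitimately reduce the claim to showing that the flow $\Phi^{\rho\cdot\chi_{(t^{\prime},\infty)}}(\mu^{\prime},\cdot)$ is identically $\mu^{\prime}$, which your expansion argument delivers, since the intervals indexing the characteristic functions partition $\mathbf{R}$, so at each instant exactly one of them equals $1$ and the $\oplus$-sum collapses to $\mu^{\prime}$. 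Two remarks. First, the fact you establish on the way --- that the flow of a fixed point under an arbitrary progressive $\rho$ is constant --- is precisely Corollary \ref{Cor17}; in the paper's ordering that corollary is presented as a consequence of Theorems \ref{The15}, \ref{The16}, so your route effectively inverts the logical dependency (Theorem \ref{The16} becomes a consequence of the corollary-level fact plus Theorem \ref{The10} d)), and it does so without circularity because you prove the auxiliary fact from scratch. Second, the only bookkeeping point worth making explicit is that the canonical representation of $\rho\cdot\chi_{(t^{\prime},\infty)}$ in the form (\ref{1}) is obtained by discarding the pulses at $t_{k}\leq t^{\prime}$ and re-indexing; this is exactly what Theorem \ref{The10} a), b), c) guarantee remains in $\Pi_{n}$ and $Seq$, and you cite those items for precisely this purpose, so the step is covered.
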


\begin{corollary}
\label{Cor17}\cite{bib8} We have $\forall\mu\in\mathbf{B}^{n},\forall\rho\in
P_{n},$%
\[
\Phi(\mu)=\mu\Longrightarrow\forall t\in\mathbf{R},\Phi^{\rho}(\mu,t)=\mu.
\]

\end{corollary}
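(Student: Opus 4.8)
The plan is to derive Corollary \ref{Cor17} directly from Theorem \ref{The16}, so the whole argument reduces to verifying the two hypotheses of that theorem at the instant $t'$ chosen to be any point strictly below $t_0$. First I would fix $\mu\in\mathbf{B}^n$ with $\Phi(\mu)=\mu$ and fix an arbitrary $\rho\in P_n$, written as in (\ref{1}) with its associated sequence $(t_k)\in Seq$. The first hypothesis of Theorem \ref{The16} (that $\mu$ is a fixed point) is given. For the accessibility hypothesis I would exhibit an explicit $t'$ at which $\Phi^{\rho}(\mu,t')=\mu$: from the definition of the flow $\Phi^{\rho}$, for any $t<t_0$ we have $\Phi^{\rho}(\mu,t)=\mu\cdot\chi_{(-\infty,t_0)}(t)=\mu$, since all the later characteristic functions vanish there. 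So picking any $t'<t_0$ makes $\mu$ accessible at $t'$.

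Having both hypotheses, Theorem \ref{The16} immediately yields $\Phi^{\rho}(\mu,t)=\mu$ for all $t\geq t'$. Combined with the observation above that $\Phi^{\rho}(\mu,t)=\mu$ already holds for every $t<t_0$ (in particular for every $t<t'$), this covers the whole real line: for $t\geq t'$ we use the theorem, and for $t<t'\leq t_0$ we use the defining formula. Hence $\Phi^{\rho}(\mu,t)=\mu$ for all $t\in\mathbf{R}$, which is exactly the claim.

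The only point requiring a little care, and the step I would treat as the main (if minor) obstacle, is confirming that the value of the flow on $(-\infty,t_0)$ is genuinely $\mu$ rather than something depending on the first iterate. This is transparent from the piecewise definition of $\Phi^{\rho}$: the leading term is $\mu\cdot\chi_{(-\infty,t_0)}(t)$ and all subsequent terms are supported on $[t_0,\infty)$, so for $t<t_0$ the $\oplus$-sum collapses to $\mu$. One should also note that the choice $t'<t_0$ is always available because $(t_k)\in Seq$ has a well-defined smallest element $t_0$, so the interval $(-\infty,t_0)$ is nonempty. No appeal to Theorem \ref{The10} or the translation machinery is needed; the corollary is essentially the statement that a fixed point, which is accessible at time $-\infty$ by the very initial condition, is pinned for all time.
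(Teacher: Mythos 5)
Your proof is correct and is the natural (and evidently intended) derivation: the paper states this result as a corollary immediately following Theorem \ref{The16}, deferring the proof to \cite{bib8}, and your argument supplies exactly that derivation. Observing from the defining formula of the flow that $\Phi^{\rho}(\mu,t)=\mu$ for every $t<t_0$ gives the accessibility hypothesis at any $t'<t_0$; Theorem \ref{The16} then pins the value $\mu$ for all $t\geq t'$, and the formula itself covers $t<t'$, so the two ranges together exhaust $\mathbf{R}$.
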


\section{$\omega-$limit sets}

\begin{definition}
\label{Def19}For $\mu\in\mathbf{B}^{n}$ and $\rho\in P_{n},$ the set%
\[
\omega_{\rho}(\mu)=\{\mu^{\prime}|\mu^{\prime}\in\mathbf{B}^{n},\exists
(t_{k})\in Seq,\underset{k\rightarrow\infty}{\lim}\Phi^{\rho}(\mu,t_{k}%
)=\mu^{\prime}\}
\]
is called the $\omega-$\textbf{limit set} of the orbit $\Phi^{\rho}(\mu
,\cdot).$
\end{definition}

\begin{theorem}
\label{The20_}\cite{bib8} For any $\mu\in\mathbf{B}^{n}$ and any $\rho\in
P_{n},$ we have:

a) $\omega_{\rho}(\mu)\neq\emptyset;$

b) $\forall t^{\prime}\in\mathbf{R},$ $\omega_{\rho}(\mu)\subset\{\Phi^{\rho
}(\mu,t)|t\geq t^{\prime}\}\subset Or_{\rho}(\mu);$

c) $\exists t^{\prime}\in\mathbf{R},\omega_{\rho}(\mu)=\{\Phi^{\rho}%
(\mu,t)|t\geq t^{\prime}\}$ and any $t^{\prime\prime}\geq t^{\prime}$ fulfills
$\omega_{\rho}(\mu)=\{\Phi^{\rho}(\mu,t)|t\geq t^{\prime\prime}\};$

d) $\forall t^{\prime}\in\mathbf{R},\forall t^{\prime\prime}\geq t^{\prime
},\{\Phi^{\rho}(\mu,t)|t\geq t^{\prime}\}=\{\Phi^{\rho}(\mu,t)|t\geq
t^{\prime\prime}\}$ implies $\omega_{\rho}(\mu)=\{\Phi^{\rho}(\mu,t)|t\geq
t^{\prime}\};$

e) we suppose that $\omega_{\rho}(\mu)=\{\Phi^{\rho}(\mu,t)|t\geq t^{\prime
}\},t^{\prime}\in\mathbf{R}.$ Then $\forall\mu^{\prime}\in\omega_{\rho}%
(\mu),\forall t^{\prime\prime}\geq t^{\prime},$ if $\Phi^{\rho}(\mu
,t^{\prime\prime})=\mu^{\prime}$ we get%
\[
\omega_{\rho}(\mu)=\{\Phi^{\rho\cdot\chi_{(t^{\prime\prime},\infty)}}%
(\mu^{\prime},t)|t\geq t^{\prime\prime}\}=Or_{\rho\cdot\chi_{(t^{\prime\prime
},\infty)}}(\mu^{\prime})=\omega_{\rho\cdot\chi_{(t^{\prime\prime},\infty)}%
}(\mu^{\prime}).
\]

\end{theorem}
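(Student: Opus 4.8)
The plan is to prove the five parts more or less in order, since each leans on the previous ones, and to exploit the key structural fact established in Theorem \ref{The15} and Theorem \ref{The16}: along any orbit the state is eventually constant precisely when it reaches a fixed point. First, for part a) I would argue that $\mathbf{B}^{n}$ is finite and $\Phi^{\rho}(\mu,\cdot)$ is a function $\mathbf{R}\rightarrow\mathbf{B}^{n}$ that is constant on each interval $[t_{k},t_{k+1})$; hence at least one value $\mu^{\prime}$ must be taken on the tails of infinitely many such intervals, so one can extract a sequence $(t_{k})\in Seq$ along which $\Phi^{\rho}(\mu,t_{k})=\mu^{\prime}$ is eventually constant, giving $\mu^{\prime}\in\omega_{\rho}(\mu)$ and thus nonemptiness. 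The finiteness of the codomain combined with the unboundedness of $Seq$ is the engine here: a pigeonhole argument guarantees recurrence of some value.

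For part b) the inclusion $\omega_{\rho}(\mu)\subset Or_{\rho}(\mu)$ is essentially by definition, since any limit of values $\Phi^{\rho}(\mu,t_{k})$ in a discrete (hence closed) space is itself one of those values and therefore lies in the orbit. The sharper claim, that $\omega_{\rho}(\mu)\subset\{\Phi^{\rho}(\mu,t)\mid t\geq t^{\prime}\}$ for every $t^{\prime}$, follows because if $\underset{k\rightarrow\infty}{\lim}\Phi^{\rho}(\mu,t_{k})=\mu^{\prime}$ then $\Phi^{\rho}(\mu,t_{k})=\mu^{\prime}$ for all large $k$, and since $(t_{k})$ is unbounded above we may pick such a $t_{k}\geq t^{\prime}$. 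The second inclusion $\{\Phi^{\rho}(\mu,t)\mid t\geq t^{\prime}\}\subset Or_{\rho}(\mu)$ is trivial.

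The heart of the theorem is part c), asserting that the $\omega$-limit set actually equals a tail of the orbit. Here I would use finiteness decisively: the family of tail sets $\{\Phi^{\rho}(\mu,t)\mid t\geq t^{\prime}\}$ is nonincreasing in $t^{\prime}$ and takes values among the finitely many subsets of $\mathbf{B}^{n}$, so it stabilizes at some $t^{\prime}$; call the stable set $S$. Every element of $S$ recurs on arbitrarily late intervals (otherwise it would drop out of a later tail, contradicting stability), so $S\subset\omega_{\rho}(\mu)$, while part b) gives the reverse inclusion $\omega_{\rho}(\mu)\subset S$. The stabilization argument also yields the ``any $t^{\prime\prime}\geq t^{\prime}$'' clause. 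Part d) is a converse bookkeeping remark: if a single tail already coincides with a later tail, that value is the stable set $S$, so it equals $\omega_{\rho}(\mu)$ by part c). I expect the main obstacle to be making the recurrence-versus-stabilization dichotomy fully rigorous in continuous time, i.e. correctly handling the half-open intervals $[t_{k},t_{k+1})$ on which the orbit is constant, so that ``eventually taken'' and ``taken infinitely often as $t\to\infty$'' are shown to be equivalent for each value.

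Finally, part e) is where Theorem \ref{The10}(d) does the work. Assuming $\omega_{\rho}(\mu)$ equals a stabilized tail from $t^{\prime}$, and given $\mu^{\prime}\in\omega_{\rho}(\mu)$ with $\Phi^{\rho}(\mu,t^{\prime\prime})=\mu^{\prime}$ for some $t^{\prime\prime}\geq t^{\prime}$, I would invoke the shift identity $\Phi^{\rho}(\mu,t)=\Phi^{\rho\cdot\chi_{(t^{\prime\prime},\infty)}}(\mu^{\prime},t)$ valid for all $t\geq t^{\prime\prime}$, together with Theorem \ref{The10}(c) guaranteeing $\rho\cdot\chi_{(t^{\prime\prime},\infty)}\in P_{n}$. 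This rewrites the tail $\{\Phi^{\rho}(\mu,t)\mid t\geq t^{\prime\prime}\}$ as the tail $\{\Phi^{\rho\cdot\chi_{(t^{\prime\prime},\infty)}}(\mu^{\prime},t)\mid t\geq t^{\prime\prime}\}$; since $t^{\prime\prime}\geq t^{\prime}$, part c) identifies this with $\omega_{\rho}(\mu)$, while Theorem \ref{The20_}(b)--(c) applied to the shifted orbit starting at $\mu^{\prime}$ identifies it with $Or_{\rho\cdot\chi_{(t^{\prime\prime},\infty)}}(\mu^{\prime})$ and with $\omega_{\rho\cdot\chi_{(t^{\prime\prime},\infty)}}(\mu^{\prime})$, closing the chain of equalities.
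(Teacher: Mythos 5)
The paper gives no proof of Theorem \ref{The20_} at all --- it is quoted from \cite{bib8} --- so your argument has to stand on its own. Parts a), b), c) of your proposal are sound and are the natural arguments: pigeonhole over the finitely many values of the step function $\Phi^{\rho}(\mu,\cdot)$ gives a); convergence in a discrete space meaning eventual constancy, plus unboundedness of the sequence of times, gives b); and the stabilization of the nested, nonincreasing family of tails $T_{t'}=\{\Phi^{\rho}(\mu,t)\mid t\geq t'\}$ inside the finite set $P(\mathbf{B}^{n})$ gives c). Part e) is also essentially right and correctly leans on Theorem \ref{The10} c), d); the one step you gloss over is the middle equality $\{\Phi^{\rho\cdot\chi_{(t'',\infty)}}(\mu',t)\mid t\geq t''\}=Or_{\rho\cdot\chi_{(t'',\infty)}}(\mu')$, which does not follow from b)--c) applied to the shifted orbit: you need the extra remark that $\rho\cdot\chi_{(t'',\infty)}$ has no switching instants at or before $t''$, hence $\Phi^{\rho\cdot\chi_{(t'',\infty)}}(\mu',t)=\mu'$ for all $t\leq t''$, and $\mu'$ already lies in the tail. (Similarly, c)'s ``any $t''\geq t'$'' clause is attached to the existential $t'$ of c), not to the $t'$ hypothesized in e); the fix is the one-liner $T_{t''}\subset T_{t'}=\omega_{\rho}(\mu)\subset T_{t''}$, the last inclusion by b).)

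The genuine gap is part d). You read its hypothesis as ``a single tail coincides with some later tail'' and claim such a tail must be the stable set $S$. That implication is false. Take $n=1$, $\Phi\equiv 1$, $\mu=0$, and $\rho$ with all $\alpha^{k}=1$ and $t_{k}=k$; then $\Phi^{\rho}(0,t)=0$ for $t<0$ and $\Phi^{\rho}(0,t)=1$ for $t\geq 0$, so the tails from $t'=-1$ and $t''=-\tfrac{1}{2}$ are both $\{0,1\}$, yet $\omega_{\rho}(0)=\{1\}$. Hence two coinciding tails need not equal $\omega_{\rho}(\mu)$, and your derivation of d) from c) collapses; indeed, under your reading statement d) itself would be false. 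The hypothesis of d) must be read with the universal quantifier inside it: \emph{for the given} $t'$, $T_{t'}=T_{t''}$ holds \emph{for every} $t''\geq t'$ (d) is the converse bookkeeping to c) only in this form). With that reading the proof is immediate and does not even need c): the hypothesis says every element of $T_{t'}$ reappears at arbitrarily late times, so $T_{t'}\subset\omega_{\rho}(\mu)$, while b) gives $\omega_{\rho}(\mu)\subset T_{t'}$. You should replace the ``single coincidence'' argument by this one.
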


\begin{remark}
\label{Rem23}If in Theorem \ref{The20_} e) we take $t^{\prime\prime}%
\in\mathbf{R}$ arbitrarily, the equation%
\begin{equation}
\omega_{\rho}(\mu)=\omega_{\rho\cdot\chi_{(t^{\prime\prime},\infty)}}%
(\Phi^{\rho}(\mu,t^{\prime\prime})) \label{asterisc}%
\end{equation}
is still true. Indeed, for sufficiently great $t^{\prime\prime\prime}$, the
terms in (\ref{asterisc}) are equal with%
\[
\{\Phi^{\rho}(\mu,t)|t\geq t^{\prime\prime\prime}\}=\{\Phi^{\rho\cdot
\chi_{(t^{\prime\prime},\infty)}}(\Phi^{\rho}(\mu,t^{\prime\prime}),t)|t\geq
t^{\prime\prime\prime}\}.
\]

\end{remark}

\begin{theorem}
\label{The20}\cite{bib8} For arbitrary $\mu\in\mathbf{B}^{n}$,$\rho\in P_{n}$
the following statements are true:

a) $\underset{t\rightarrow\infty}{\lim}\Phi^{\rho}(\mu,t)$ exists
$\Longleftrightarrow card(\omega_{\rho}(\mu))=1;$

b) if $\exists\mu^{\prime}\in\mathbf{B}^{n},\omega_{\rho}(\mu)=\{\mu^{\prime
}\},$ then $\underset{t\rightarrow\infty}{\lim}\Phi^{\rho}(\mu,t)=\mu^{\prime
}$ and $\Phi(\mu^{\prime})=\mu^{\prime};$

c) if $\exists\mu^{\prime}\in\mathbf{B}^{n},\Phi(\mu^{\prime})=\mu^{\prime}$
and $\mu^{\prime}\in Or_{\rho}(\mu),$ then $\omega_{\rho}(\mu)=\{\mu^{\prime
}\}.$
\end{theorem}

\begin{theorem}
\label{The22}\cite{bib8} Let be $\mu\in\mathbf{B}^{n},\rho\in P_{n}%
,d\in\mathbf{R}.$ We have $\omega_{\rho}(\mu)=\omega_{\rho\circ\tau^{d}}%
(\mu).$
\end{theorem}

\section{Invariant sets}

\begin{theorem}
\label{The14}\cite{bib8} We consider the function $\Phi:\mathbf{B}%
^{n}\rightarrow\mathbf{B}^{n}$ and let be the set $A\in P^{\ast}%
(\mathbf{B}^{n}).$ For any $\mu\in A,$ the following properties are equivalent%
\begin{equation}
\exists\alpha\in\Pi_{n},\forall k\in\mathbf{N},\Phi^{\alpha^{0}...\alpha^{k}%
}(\mu)\in A, \label{inv3__}%
\end{equation}%
\begin{equation}
\exists\rho\in P_{n},\forall t\in\mathbf{R},\Phi^{\rho}(\mu,t)\in A,
\label{inv1__}%
\end{equation}%
\begin{equation}
\exists\rho\in P_{n},Or_{\rho}(\mu)\subset A \label{inv2}%
\end{equation}
and the following properties are also equivalent%
\begin{equation}
\forall\alpha\in\Pi_{n},\forall k\in\mathbf{N},\Phi^{\alpha^{0}...\alpha^{k}%
}(\mu)\in A, \label{bas5}%
\end{equation}%
\begin{equation}
\forall\rho\in P_{n},\forall t\in\mathbf{R},\Phi^{\rho}(\mu,t)\in A,
\label{inv2__}%
\end{equation}%
\begin{equation}
\forall\rho\in P_{n},Or_{\rho}(\mu)\subset A, \label{inv2_}%
\end{equation}%
\begin{equation}
\forall\lambda\in\mathbf{B}^{n},\Phi^{\lambda}(\mu)\in A. \label{inv_1_}%
\end{equation}

\end{theorem}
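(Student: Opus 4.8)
The plan is to route every condition through the discrete-time iterates $\Phi^{\alpha^{0}...\alpha^{k}}(\mu)$ and to isolate the single-step condition (\ref{inv_1_}) at the end. First observe that the orbit-set conditions are mere restatements: since $Or_{\rho}(\mu)=\{\Phi^{\rho}(\mu,t)\mid t\in\mathbf{R}\}$ by definition, the clause ``$\forall t,\Phi^{\rho}(\mu,t)\in A$'' is verbatim ``$Or_{\rho}(\mu)\subset A$'', so (\ref{inv1__})$\Leftrightarrow$(\ref{inv2}) and (\ref{inv2__})$\Leftrightarrow$(\ref{inv2_}) need no argument. What genuinely remains is to tie the discrete conditions (\ref{inv3__}) and (\ref{bas5}) to their continuous counterparts, and then to relate (\ref{inv_1_}) to (\ref{bas5}).

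The bridge is the explicit piecewise form of $\Phi^{\rho}$. For $\rho\in P_{n}$ assembled from $\alpha\in\Pi_{n}$ and $(t_{k})\in Seq$, the orbit takes the value $\mu$ on $(-\infty,t_{0})$ and the value $\Phi^{\alpha^{0}...\alpha^{k}}(\mu)$ on $[t_{k},t_{k+1})$; hence, using $\mu\in A$, one has $Or_{\rho}(\mu)=\{\mu\}\cup\{\Phi^{\alpha^{0}...\alpha^{k}}(\mu)\mid k\in\mathbf{N}\}$ and in particular $\Phi^{\rho}(\mu,t_{k})=\Phi^{\alpha^{0}...\alpha^{k}}(\mu)$. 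This identity drives both directions at once: from a progressive $\alpha$ keeping its iterates in $A$ I would fix any $(t_{k})\in Seq$ to produce a $\rho$ with $Or_{\rho}(\mu)\subset A$, while from any such $\rho$ I would read off its underlying $\alpha\in\Pi_{n}$ and recover the iterates as the orbit values $\Phi^{\rho}(\mu,t_{k})$. Since every $\alpha\in\Pi_{n}$ underlies some $\rho\in P_{n}$ and every $\rho\in P_{n}$ has an underlying $\alpha\in\Pi_{n}$, pushing the existential quantifier through gives (\ref{inv3__})$\Leftrightarrow$(\ref{inv1__}) and pushing the universal quantifier through gives (\ref{bas5})$\Leftrightarrow$(\ref{inv2__}). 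This closes the first chain completely and all but one link of the second.

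It remains to prove (\ref{bas5})$\Leftrightarrow$(\ref{inv_1_}). The forward implication is just the instance $k=0$: for any $\lambda\in\mathbf{B}^{n}$ the sequence $\alpha=(\lambda,(1,...,1),(1,...,1),...)$ lies in $\Pi_{n}$ (prescribing one initial term cannot destroy progressiveness), so $\Phi^{\lambda}(\mu)=\Phi^{\alpha^{0}}(\mu)\in A$ by (\ref{bas5}). For the converse I would induct on $k$, writing $\Phi^{\alpha^{0}...\alpha^{k}}(\mu)=\Phi^{\alpha^{k}}(\Phi^{\alpha^{0}...\alpha^{k-1}}(\mu))$: the inductive hypothesis places the inner iterate in $A$, and the single-step hypothesis applied at that point places the whole expression back in $A$.

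The one genuinely delicate step is this last induction, and it is where the argument must be read set-theoretically. The single-step stability (\ref{inv_1_}) is needed not at $\mu$ alone but at each intermediate iterate $\Phi^{\alpha^{0}...\alpha^{k-1}}(\mu)$; the single-point version is insufficient, as the trajectory $(0,0)\to(1,0)\to(1,1)$ for $n=2$, $\Phi(0,0)=(1,0)$, $\Phi(1,0)=(1,1)$, $A=\{(0,0),(1,0)\}$ already escapes $A$ although (\ref{inv_1_}) holds at $(0,0)$. The intended content, in a section on invariant sets, is therefore to run the induction over all of $A$ simultaneously: assuming (\ref{inv_1_}) throughout $A$, every newly reached iterate again lies in $A$, the single-step hypothesis is available there, and so one-step invariance of $A$ propagates to arbitrary finite asynchronous iterations. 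Everything else is routine bookkeeping with the formula for $\Phi^{\rho}$ and with membership in $\Pi_{n}$.
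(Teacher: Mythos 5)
The paper gives no proof of Theorem \ref{The14} at all --- it is imported from \cite{bib8} with a citation --- so there is no ``paper proof'' to compare against and your argument must be judged on its own. On its own it is correct, and the route is the natural one: every $\rho\in P_{n}$ is, by the very definition of $P_{n}$, assembled from some $\alpha\in\Pi_{n}$ and $(t_{k})\in Seq$; the orbit is exactly $\{\mu\}\cup\{\Phi^{\alpha^{0}...\alpha^{k}}(\mu)\mid k\in\mathbf{N}\}$ with $\Phi^{\rho}(\mu,t_{k})=\Phi^{\alpha^{0}...\alpha^{k}}(\mu)$; the standing hypothesis $\mu\in A$ is what covers the initial piece $(-\infty,t_{0})$ of the orbit; transporting the quantifier over $\alpha$ to a quantifier over $\rho$ and back gives (\ref{inv3__})$\Leftrightarrow$(\ref{inv1__}) and (\ref{bas5})$\Leftrightarrow$(\ref{inv2__}), while (\ref{inv1__})$\Leftrightarrow$(\ref{inv2}) and (\ref{inv2__})$\Leftrightarrow$(\ref{inv2_}) are definitional. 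Your derivation of (\ref{inv_1_}) from (\ref{bas5}), by completing $\lambda$ to the progressive sequence $(\lambda,(1,...,1),(1,...,1),...)$ and taking $k=0$, is also fine.

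The step you singled out as delicate is indeed the crux, and your diagnosis is right. For a single fixed $\mu\in A$ --- which is what the wording ``for any $\mu\in A$, the following properties are equivalent'' literally asserts --- the implication (\ref{inv_1_})$\Rightarrow$(\ref{bas5}) is false, and your counterexample checks out: with $\Phi(0,0)=(1,0)$, $\Phi(1,0)=(1,1)$ and $A=\{(0,0),(1,0)\}$, all four values $\Phi^{\lambda}(0,0)$ lie in $A$, yet $\Phi^{\alpha^{0}\alpha^{1}}(0,0)=\Phi(\Phi(0,0))=(1,1)\notin A$ for $\alpha^{0}=\alpha^{1}=(1,1)$ completed to a progressive sequence. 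The condition (\ref{inv_1_}) is the only one of the seven that constrains a single application of $\Phi$, so it can control iterates passing through points of $A$ other than $\mu$ only if it is assumed at those points too. Hence the equivalence involving (\ref{inv_1_}) is really an equivalence of properties of the set $A$, each condition prefixed by $\forall\mu\in A$; that is exactly the form in which Definition \ref{Def166} (n-invariance) uses the theorem, and it is exactly what your induction on $k$ --- each newly reached iterate lies in $A$, where the one-step hypothesis is again available --- establishes. So your proof is correct under that set-level reading, and you have correctly flagged that the literal pointwise reading of the second chain, as transcribed in this paper, is not a theorem; the first chain, by contrast, is genuinely pointwise.
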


\begin{definition}
\label{Def166}The set $A\in P^{\ast}(\mathbf{B}^{n})$ is called a
\textbf{p-invariant} (or \textbf{p-stable}) \textbf{set} of the system
$\Xi_{\Phi}$ if it fulfills for any $\mu\in A$ one of (\ref{inv3__}),...,
(\ref{inv2}) and it is called an \textbf{n-invariant }(or n-\textbf{stable})
\textbf{set} of $\Xi_{\Phi}$ if it fulfills $\forall\mu\in A$ one of
(\ref{bas5}),..., (\ref{inv_1_}).
\end{definition}

\begin{remark}
In the previous terminology, the letter 'p' comes from 'possibly' and the
letter 'n' comes from 'necessarily'. Both 'p' and 'n' refer to the
quantification of $\rho$. Such kind of p-definitions and n-definitions
recalling logic are caused by the fact that we translate 'real' concepts into
'binary' concepts and the former have no $\rho$ parameters, thus after
translation $\rho$ may appear quantified in two ways. The obvious implication
is n-invariance $\Longrightarrow$ p-invariance.
\end{remark}

\begin{theorem}
\label{The81}\cite{bib8} Let be $\mu\in\mathbf{B}^{n}$ and $\rho^{\prime}\in
P_{n}.$

a) If $\Phi(\mu)=\mu,$ then $\{\mu\}$ is an n-invariant set and the set $Eq$
of the fixed points of $\Phi$ is also n-invariant;

b) the set $Or_{\rho^{\prime}}(\mu)$ is p-invariant and $\underset{\rho\in
P_{n}}{\bigcup}Or_{\rho}(\mu)$ is n-invariant;

c) the set $\omega_{\rho^{\prime}}(\mu)$ is p-invariant.
\end{theorem}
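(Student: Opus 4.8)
The plan is to verify each of the three claims by choosing, for every point of the set in question, the most convenient member of the relevant equivalence class furnished by Theorem \ref{The14}: for p-invariance I will establish the orbit-inclusion form (\ref{inv2}), and for n-invariance the single-step form (\ref{inv_1_}).

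For a), I start from $\Phi(\mu)=\mu$ and compute $\Phi^{\lambda}(\mu)$ coordinatewise from its definition. Since $\Phi_{i}(\mu)=\mu_{i}$, each coordinate reduces to $\overline{\lambda_{i}}\cdot\mu_{i}\oplus\lambda_{i}\cdot\mu_{i}=(\overline{\lambda_{i}}\oplus\lambda_{i})\cdot\mu_{i}=\mu_{i}$, whence $\Phi^{\lambda}(\mu)=\mu\in\{\mu\}$ for every $\lambda\in\mathbf{B}^{n}$; this is exactly (\ref{inv_1_}), so $\{\mu\}$ is n-invariant (alternatively one may invoke Corollary \ref{Cor17}). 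The same coordinatewise identity applies verbatim to every $\mu\in Eq$, giving $\Phi^{\lambda}(\mu)=\mu\in Eq$, so $Eq$ is n-invariant as well.

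For the orbit $Or_{\rho'}(\mu)$ in b), I take an arbitrary $\nu\in Or_{\rho'}(\mu)$, write $\nu=\Phi^{\rho'}(\mu,t')$, and aim at (\ref{inv2}) with the shifted stimulus $\rho'\cdot\chi_{(t',\infty)}$, which lies in $P_{n}$ by Theorem \ref{The10} c). Theorem \ref{The10} d) gives $\Phi^{\rho'\cdot\chi_{(t',\infty)}}(\nu,t)=\Phi^{\rho'}(\mu,t)\in Or_{\rho'}(\mu)$ for all $t\geq t'$, while for $t<t'$ the shifted orbit has not yet jumped and equals $\nu=\Phi^{\rho'}(\mu,t')\in Or_{\rho'}(\mu)$; hence $Or_{\rho'\cdot\chi_{(t',\infty)}}(\nu)\subset Or_{\rho'}(\mu)$, establishing p-invariance. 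The n-invariance of $\bigcup_{\rho\in P_{n}}Or_{\rho}(\mu)$ is the delicate step, and I again target (\ref{inv_1_}). Given $\nu$ in the union, fix $\rho\in P_{n}$ and $t'$ with $\nu=\Phi^{\rho}(\mu,t')$, so $\nu=\Phi^{\alpha^{0}...\alpha^{m}}(\mu)$ for the finite prefix of the associated $\alpha\in\Pi_{n}$ consumed by time $t'$. For an arbitrary $\lambda\in\mathbf{B}^{n}$ I will build a new sequence $\gamma$ whose first $m+2$ terms are $\alpha^{0},...,\alpha^{m},\lambda$ and whose tail is chosen progressive; by Theorem \ref{The10} a) prepending finitely many terms to a progressive tail keeps $\gamma\in\Pi_{n}$. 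Pairing $\gamma$ with any $(s_{k})\in Seq$ yields $\rho''\in P_{n}$ along which, just after the $(m+1)$-th jump, the state equals $\Phi^{\gamma^{0}...\gamma^{m+1}}(\mu)=\Phi^{\lambda}(\Phi^{\alpha^{0}...\alpha^{m}}(\mu))=\Phi^{\lambda}(\nu)$. Thus $\Phi^{\lambda}(\nu)\in Or_{\rho''}(\mu)\subset\bigcup_{\rho\in P_{n}}Or_{\rho}(\mu)$, which is (\ref{inv_1_}); the case $\lambda=(0,...,0)$ is immediate since then $\Phi^{\lambda}(\nu)=\nu$. The main obstacle here is purely bookkeeping: confirming that the inserted single step $\lambda$ is realized as the $(m+1)$-th asynchronous iteration and that the modified stimulus stays progressive.

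Finally, for c) I read off p-invariance of $\omega_{\rho'}(\mu)$ directly from Theorem \ref{The20_}. Part c) of that theorem supplies $t^{*}$ with $\omega_{\rho'}(\mu)=\{\Phi^{\rho'}(\mu,t)\mid t\geq t^{*}\}$, so every $\nu\in\omega_{\rho'}(\mu)$ equals $\Phi^{\rho'}(\mu,t'')$ for some $t''\geq t^{*}$; part e) then gives $Or_{\rho'\cdot\chi_{(t'',\infty)}}(\nu)=\omega_{\rho'}(\mu)$, and since $\rho'\cdot\chi_{(t'',\infty)}\in P_{n}$ this is precisely the orbit-inclusion form (\ref{inv2}) for $\nu$, proving p-invariance.
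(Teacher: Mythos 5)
The paper itself offers no proof of Theorem \ref{The81}: it is quoted from \cite{bib8}, so there is nothing internal to compare your argument against; I can only judge it on its own terms, using the results restated in this paper. On those terms your proof is correct, and it is pleasantly self-contained: part a) by the coordinatewise identity $\overline{\lambda_{i}}\cdot\mu_{i}\oplus\lambda_{i}\cdot\Phi_{i}(\mu)=(\overline{\lambda_{i}}\oplus\lambda_{i})\cdot\mu_{i}=\mu_{i}$, giving form (\ref{inv_1_}) for both $\{\mu\}$ and $Eq$; part b) p-invariance via the shifted stimulus $\rho'\cdot\chi_{(t',\infty)}$ together with Theorem \ref{The10} c), d); part b) n-invariance by splicing an arbitrary single step $\lambda$ after the finite prefix $\alpha^{0},...,\alpha^{m}$ realizing $\nu$, which again lands on (\ref{inv_1_}); part c) by reading (\ref{inv2}) directly off Theorem \ref{The20_} c), e). Two small blemishes, neither a real gap. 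First, you cite Theorem \ref{The10} a) for the claim that prepending finitely many terms to a progressive tail keeps the sequence progressive; that theorem states the opposite operation (deleting a finite prefix preserves progressiveness). The fact you actually need is trivially true straight from the definition of $\Pi_{n}$ (the sets $\{k\mid\gamma_{i}^{k}=1\}$ stay infinite when finitely many terms are prepended), so you should justify it that way rather than by that citation. Second, in the n-invariance argument for $\bigcup_{\rho\in P_{n}}Or_{\rho}(\mu)$ you silently assume $t'\geq t_{0}$, i.e. that at least one jump of $\rho$ has occurred; when $t'<t_{0}$ one has $\nu=\mu$ and the prefix is empty, in which case the same construction with $\gamma=(\lambda,\beta^{0},\beta^{1},...)$ works verbatim — worth one sentence to cover it.
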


\section{The basin of attraction}

\begin{theorem}
\label{The138}\cite{bib8} We consider the set $A\in P^{\ast}(\mathbf{B}^{n}).$
For any $\mu\in\mathbf{B}^{n},$ the following statements are equivalent%
\begin{equation}
\exists\alpha\in\Pi_{n},\exists k^{\prime}\in\mathbf{N},\forall k\geq
k^{\prime},\Phi^{\alpha^{0}...\alpha^{k}}(\mu)\in A,
\end{equation}%
\begin{equation}
\exists\rho\in P_{n},\exists t^{\prime}\in R,\forall t\geq t^{\prime}%
,\Phi^{\rho}(\mu,t)\in A, \label{sta3_}%
\end{equation}%
\begin{equation}
\exists\rho\in P_{n},\omega_{\rho}(\mu)\subset A \label{sta9}%
\end{equation}
and the following statements are equivalent too%
\begin{equation}
\forall\alpha\in\Pi_{n},\exists k^{\prime}\in\mathbf{N},\forall k\geq
k^{\prime},\Phi^{\alpha^{0}...\alpha^{k}}(\mu)\in A,
\end{equation}%
\begin{equation}
\forall\rho\in P_{n},\exists t^{\prime}\in R,\forall t\geq t^{\prime}%
,\Phi^{\rho}(\mu,t)\in A, \label{sta13}%
\end{equation}%
\begin{equation}
\forall\rho\in P_{n},\omega_{\rho}(\mu)\subset A. \label{sta14}%
\end{equation}

\end{theorem}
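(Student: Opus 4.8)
The plan is to prove each of the two groups as a cyclic chain of three implications, running discrete $\Rightarrow$ continuous $\Rightarrow$ $\omega$-set $\Rightarrow$ discrete, and to let Theorem \ref{The20_} carry the analytic content. The one structural fact underlying every step is that for $\rho\in P_{n}$ of the form (\ref{1}), with associated $\alpha\in\Pi_{n}$ and $(t_{k})\in Seq$, the orbit $\Phi^{\rho}(\mu,\cdot)$ is constant on each interval $[t_{k},t_{k+1})$ with value $\Phi^{\alpha^{0}...\alpha^{k}}(\mu)$; in particular $\Phi^{\rho}(\mu,t_{k})=\Phi^{\alpha^{0}...\alpha^{k}}(\mu)$. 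This is the dictionary translating the discrete statements (written with $\alpha$ and $k$) into the continuous ones (written with $\rho$ and $t$) and back.

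For the first (existential) group I would argue as follows. From the discrete condition, with witnesses $\alpha$ and $k'$, attach to $\alpha$ an arbitrary $(t_{k})\in Seq$, form the corresponding $\rho\in P_{n}$, and set $t'=t_{k'}$; each $t\geq t'$ lies in some $[t_{k},t_{k+1})$ with $k\geq k'$, whence $\Phi^{\rho}(\mu,t)=\Phi^{\alpha^{0}...\alpha^{k}}(\mu)\in A$, which is (\ref{sta3_}). The passage (\ref{sta3_}) $\Rightarrow$ (\ref{sta9}) is immediate from Theorem \ref{The20_} b), since $\omega_{\rho}(\mu)\subset\{\Phi^{\rho}(\mu,t)\mid t\geq t'\}\subset A$. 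To close the cycle, from (\ref{sta9}) invoke Theorem \ref{The20_} c) to obtain a $t'$ with $\omega_{\rho}(\mu)=\{\Phi^{\rho}(\mu,t)\mid t\geq t'\}\subset A$; because $(t_{k})$ is unbounded above, pick $k'$ with $t_{k'}\geq t'$, and then $\Phi^{\alpha^{0}...\alpha^{k}}(\mu)=\Phi^{\rho}(\mu,t_{k})\in A$ for all $k\geq k'$, with $\alpha\in\Pi_{n}$ the sequence associated to $\rho$.

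The second (universal) group runs through the same cycle and reuses the same three arguments, the only difference being the placement of the quantifier on $\rho$ and $\alpha$; here the correspondence $\rho\leftrightarrow(\alpha,(t_{k}))$ is what makes ``for all $\alpha$'' and ``for all $\rho$'' interchangeable. For discrete $\Rightarrow$ (\ref{sta13}), start from an arbitrary $\rho$, read off its $\alpha$, apply the universal discrete hypothesis to that $\alpha$ to get $k'$, and finish with $t'=t_{k'}$ as above. The step (\ref{sta13}) $\Rightarrow$ (\ref{sta14}) is again Theorem \ref{The20_} b) applied for each $\rho$, and (\ref{sta14}) $\Rightarrow$ discrete starts from an arbitrary $\alpha$, builds some $\rho$ by attaching any $(t_{k})\in Seq$, and repeats the Theorem \ref{The20_} c) argument. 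The step I expect to require the most care is the $\omega$-set $\Rightarrow$ discrete direction: it genuinely needs the exact identity of Theorem \ref{The20_} c) (the tail merely contains $\omega_{\rho}(\mu)$ in general, and only coincides with it eventually) together with the cofinality of $(t_{k})$ to select $k'$; everywhere else the piecewise-constant dictionary and Theorem \ref{The20_} b) suffice.
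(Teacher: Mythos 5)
Your proof is correct. Note that the paper itself offers no proof of Theorem \ref{The138} to compare against: the result is imported from \cite{bib8} and stated without argument. Your reconstruction uses exactly the tools the paper supplies and uses them the way they are meant to be used: the piecewise-constant dictionary $\Phi^{\rho}(\mu,t)=\Phi^{\alpha^{0}\ldots\alpha^{k}}(\mu)$ for $t\in[t_{k},t_{k+1})$ to pass between the discrete and continuous formulations, Theorem \ref{The20_} b) for $\omega_{\rho}(\mu)\subset\{\Phi^{\rho}(\mu,t)\mid t\geq t^{\prime}\}$ in the forward direction, and Theorem \ref{The20_} c) together with the unboundedness of $(t_{k})$ for the return from the $\omega$-limit statements to the discrete ones; the quantifier bookkeeping in the universal group (reading $\alpha$ off an arbitrary $\rho$, respectively attaching an arbitrary $(t_{k})\in Seq$ to a given $\alpha$) is also handled correctly, which is the only place such an argument could silently go wrong.
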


\begin{definition}
\label{Def123}The \textbf{basin }(or \textbf{kingdom}, or \textbf{domain})
\textbf{of p-attraction} or the \textbf{p-stable set} of the set $A\in
P^{\ast}(\mathbf{B}^{n})$ is given by%
\begin{equation}
\overline{W}(A)=\{\mu|\mu\in\mathbf{B}^{n},\exists\rho\in P_{n},\omega_{\rho
}(\mu)\subset A\};
\end{equation}
the \textbf{basin }(or \textbf{kingdom}, or \textbf{domain}) \textbf{of
n-attraction} or the \textbf{n-stable set} of the set $A$ is given by%
\begin{equation}
\underline{W}(A)=\{\mu|\mu\in\mathbf{B}^{n},\forall\rho\in P_{n},\omega_{\rho
}(\mu)\subset A\}.
\end{equation}

\end{definition}

\begin{remark}
Definition \ref{Def123} makes use of the properties (\ref{sta9}) and
(\ref{sta14}). We can make use also in this Definition of the other equivalent
properties from Theorem \ref{The138}.

In Definition \ref{Def123}, one or both basins of attraction $\overline
{W}(A),\underline{W}(A)$ may be empty.
\end{remark}

\begin{theorem}
\label{The25}\cite{bib8} We have:

i) $\overline{W}(\mathbf{B}^{n})=\underline{W}(\mathbf{B}^{n})=\mathbf{B}%
^{n};$

ii) if $A\subset A^{\prime},$ then $\overline{W}(A)\subset\overline
{W}(A^{\prime})$ and $\underline{W}(A)\subset\underline{W}(A^{\prime})$ hold.
\end{theorem}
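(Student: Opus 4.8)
The plan is to prove both parts directly from the definitions of the basins of attraction in Definition \ref{Def123}, using the monotonicity of set-inclusion and the elementary fact that every orbit has a nonempty $\omega$-limit set (Theorem \ref{The20_} a)). Each claim reduces to a short logical manipulation, so I expect no serious obstacle; the work is mostly bookkeeping with the quantifier $\rho$.

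\begin{proof}
\textit{(i)} First I would show $\overline{W}(\mathbf{B}^{n})=\mathbf{B}^{n}$. Let $\mu\in\mathbf{B}^{n}$ be arbitrary. By Theorem \ref{The20_} a), for any fixed $\rho\in P_{n}$ we have $\omega_{\rho}(\mu)\neq\emptyset$, and trivially $\omega_{\rho}(\mu)\subset\mathbf{B}^{n}$ since $\omega_{\rho}(\mu)$ is by Definition \ref{Def19} a subset of $\mathbf{B}^{n}$. Hence there exists $\rho\in P_{n}$ with $\omega_{\rho}(\mu)\subset\mathbf{B}^{n}$, so $\mu\in\overline{W}(\mathbf{B}^{n})$; this gives $\overline{W}(\mathbf{B}^{n})=\mathbf{B}^{n}$. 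The same computation works for $\underline{W}$: the containment $\omega_{\rho}(\mu)\subset\mathbf{B}^{n}$ holds for \emph{every} $\rho\in P_{n}$ (again because every $\omega$-limit set is a subset of $\mathbf{B}^{n}$), so $\mu\in\underline{W}(\mathbf{B}^{n})$ for all $\mu$, giving $\underline{W}(\mathbf{B}^{n})=\mathbf{B}^{n}$. In both directions the reverse inclusion $\overline{W}(\mathbf{B}^{n})\subset\mathbf{B}^{n}$ and $\underline{W}(\mathbf{B}^{n})\subset\mathbf{B}^{n}$ is immediate from the definitions, since both basins are by construction subsets of $\mathbf{B}^{n}$.

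\textit{(ii)} Now suppose $A\subset A^{\prime}$. For the p-attraction inclusion, take $\mu\in\overline{W}(A)$. By definition there is some $\rho\in P_{n}$ with $\omega_{\rho}(\mu)\subset A$. Since $A\subset A^{\prime}$, transitivity of inclusion gives $\omega_{\rho}(\mu)\subset A^{\prime}$ for that same $\rho$, so $\mu\in\overline{W}(A^{\prime})$; hence $\overline{W}(A)\subset\overline{W}(A^{\prime})$. For the n-attraction inclusion, take $\mu\in\underline{W}(A)$, so that $\omega_{\rho}(\mu)\subset A$ holds for \emph{all} $\rho\in P_{n}$. Again using $A\subset A^{\prime}$, we obtain $\omega_{\rho}(\mu)\subset A^{\prime}$ for all $\rho\in P_{n}$, which is exactly the condition $\mu\in\underline{W}(A^{\prime})$. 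Thus $\underline{W}(A)\subset\underline{W}(A^{\prime})$, completing the proof.
\end{proof}

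Of the two parts, the only point requiring care is the appeal to $\omega_{\rho}(\mu)\neq\emptyset$ in part (i); without it the existential statement defining $\overline{W}$ could fail at a point, but Theorem \ref{The20_} a) supplies exactly this nonemptiness, so there is in fact no obstacle. Part (ii) is purely formal and follows by the monotonicity of the inclusion relation under the existential and universal quantifiers respectively.
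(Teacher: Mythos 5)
Your proof is correct, and since the paper states Theorem \ref{The25} as a citation from \cite{bib8} without reproducing a proof, there is nothing to diverge from: your direct verification from Definition \ref{Def123} is the natural (and surely intended) argument. One correction to your closing commentary, though: the nonemptiness $\omega_{\rho}(\mu)\neq\emptyset$ supplied by Theorem \ref{The20_} a) is not actually what part (i) needs, because even $\omega_{\rho}(\mu)=\emptyset$ would satisfy $\omega_{\rho}(\mu)\subset\mathbf{B}^{n}$. What the existential clause in the definition of $\overline{W}(\mathbf{B}^{n})$ genuinely requires is that $P_{n}\neq\emptyset$, i.e.\ that at least one progressive $\rho$ exists --- which is clear, e.g.\ $\rho(t)=(1,\ldots,1)\cdot\chi_{\{0\}}(t)\oplus(1,\ldots,1)\cdot\chi_{\{1\}}(t)\oplus\ldots$ --- and this is the (trivial, but logically necessary) fact that completes your inclusion $\mathbf{B}^{n}\subset\overline{W}(\mathbf{B}^{n})$; for $\underline{W}(\mathbf{B}^{n})$ the universal quantifier makes even this unnecessary. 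Part (ii) is exactly the formal monotonicity argument one would expect.
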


\begin{definition}
\label{Def168_}When $\overline{W}(A)\neq\emptyset,$ $A$ is said to be
\textbf{p-attractive} and for any non-empty set $B\subset\overline{W}(A),$ we
say that $A$ is \textbf{p-attractive} for $B$ and that $B$ is
\textbf{p-attracted} by $A$; $A$ is by definition \textbf{partially
p-attractive} if $\overline{W}(A)\notin\{\emptyset,\mathbf{B}^{n}\}$ and
\textbf{totally p-attractive} whenever $\overline{W}(A)=\mathbf{B}^{n}.$

The fact that $\underline{W}(A)\neq\emptyset$ makes us say that $A$ is
\textbf{n-attractive} and in this situation for any non-empty $B\subset
\underline{W}(A),$ $A$ is called \textbf{n-attractive} for $B$ and $B$ is
called to be \textbf{n-attracted} by $A;$ we use to say that $A$ is
\textbf{partially n-attractive} if $\underline{W}(A)\notin\{\emptyset
,\mathbf{B}^{n}\}$ and \textbf{totally n-attractive} if $\underline
{W}(A)=\mathbf{B}^{n}.$
\end{definition}

\begin{theorem}
\label{The106}\cite{bib8} Let $A\in P^{\ast}(\mathbf{B}^{n})$ be some set. If
$A$ is p-invariant, then $A\subset\overline{W}(A)$ and $A$ is also
p-attractive; if $A$ is n-invariant, then $A\subset\underline{W}(A)$ and $A$
is also n-attractive.
\end{theorem}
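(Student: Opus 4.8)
The plan is to establish both implications by reducing them to statements about $\omega$-limit sets and then invoking the invariance hypothesis together with the characterization from Theorem \ref{The14}. I would treat the p-invariant and n-invariant cases in parallel, since they are structurally identical up to swapping the quantifier on $\rho$.

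First I would handle the p-invariant case. Fix an arbitrary $\mu\in A$; the goal is to show $\mu\in\overline{W}(A)$, i.e.\ that there exists $\rho\in P_{n}$ with $\omega_{\rho}(\mu)\subset A$. Since $A$ is p-invariant, Definition \ref{Def166} gives us, for this $\mu$, one of the equivalent conditions (\ref{inv3__})--(\ref{inv2}); in particular by (\ref{inv1__}) there is some $\rho\in P_{n}$ with $\Phi^{\rho}(\mu,t)\in A$ for all $t\in\mathbf{R}$. I would then use Theorem \ref{The20_} b), which tells us that $\omega_{\rho}(\mu)\subset\{\Phi^{\rho}(\mu,t)\,|\,t\geq t'\}$ for every $t'\in\mathbf{R}$, together with the fact that the whole orbit lies in $A$, to conclude $\omega_{\rho}(\mu)\subset A$. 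This exhibits the required $\rho$ and shows $\mu\in\overline{W}(A)$. Since $\mu\in A$ was arbitrary, $A\subset\overline{W}(A)$, and because $A$ is nonempty (it lies in $P^{\ast}(\mathbf{B}^{n})$), $\overline{W}(A)\neq\emptyset$, so $A$ is p-attractive by Definition \ref{Def168_}.

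For the n-invariant case I would argue analogously but now starting from the universally-quantified conditions. Fix $\mu\in A$; I want $\mu\in\underline{W}(A)$, i.e.\ $\omega_{\rho}(\mu)\subset A$ for \emph{every} $\rho\in P_{n}$. Here I would invoke that $A$ is n-invariant, so by Definition \ref{Def166} it satisfies for this $\mu$ one of (\ref{bas5})--(\ref{inv_1_}), and in particular (\ref{inv2__}) gives $\Phi^{\rho}(\mu,t)\in A$ for all $\rho\in P_{n}$ and all $t\in\mathbf{R}$. Applying Theorem \ref{The20_} b) again, for each fixed $\rho$ we get $\omega_{\rho}(\mu)\subset\{\Phi^{\rho}(\mu,t)\,|\,t\geq t'\}\subset A$, and since this holds for every $\rho$, we obtain $\mu\in\underline{W}(A)$. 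Thus $A\subset\underline{W}(A)$, and nonemptiness of $A$ yields $\underline{W}(A)\neq\emptyset$, so $A$ is n-attractive.

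I do not expect a genuine obstacle here; the argument is essentially a dictionary translation between the invariance characterization and the $\omega$-limit-set definition of the basins. The one point requiring mild care is making sure the orbit-containment is used at the right level of quantification: in the p-case a single witnessing $\rho$ suffices, whereas in the n-case the containment $\Phi^{\rho}(\mu,t)\in A$ must be invoked uniformly over all $\rho$ before passing to the $\omega$-limit set. The key enabling lemma throughout is the inclusion $\omega_{\rho}(\mu)\subset Or_{\rho}(\mu)$ supplied by Theorem \ref{The20_} b), which is what lets the pointwise orbit membership propagate to the limit set without any additional computation.
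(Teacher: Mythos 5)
Your proof is correct. The paper itself states Theorem \ref{The106} without giving a proof (it is quoted from \cite{bib8}), so there is no in-paper argument to compare against; your route is the natural one and it is sound: invariance of $A$ yields, via the equivalences of Theorem \ref{The14}, an orbit (or all orbits) of $\mu$ contained in $A$, Theorem \ref{The20_} b) transfers that containment to the $\omega$-limit set with the correct handling of the quantifier on $\rho$ in each of the two cases, and non-emptiness of $A\in P^{\ast}(\mathbf{B}^{n})$ then gives $\overline{W}(A)\neq\emptyset$, respectively $\underline{W}(A)\neq\emptyset$, which is precisely p-attractiveness, respectively n-attractiveness, in the sense of Definition \ref{Def168_}.
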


\begin{remark}
The previous Theorem shows the connection that exists between invariance and
attractiveness. If $A$ is p-attractive, then $\overline{W}(A)$ is the greatest
set that is p-attracted by $A$ and the point is that this really happens when
$A$ is p-invariant. The other situation is dual.
\end{remark}

\begin{theorem}
\label{The140}\cite{bib8} Let be $A\in P^{\ast}(\mathbf{B}^{n}).$ If $A$ is
p-attractive, then $\overline{W}(A)$ is p-invariant and if $A$ is
n-attractive, then $\underline{W}(A)$ is n-invariant.
\end{theorem}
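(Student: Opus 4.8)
The plan is to fix an arbitrary point of each basin and to verify one of the equivalent invariance conditions of Theorem~\ref{The14}, as used in Definition~\ref{Def166}. In both parts the hypotheses (p-attractiveness, resp. n-attractiveness) guarantee that the basin is nonempty, so that it belongs to $P^{\ast}(\mathbf{B}^{n})$ and the invariance notions indeed apply. The central tool will be the tail-invariance of the $\omega$-limit set recorded in Remark~\ref{Rem23}, namely $\omega_{\rho}(\mu)=\omega_{\rho\cdot\chi_{(t',\infty)}}(\Phi^{\rho}(\mu,t'))$, together with the translation invariance $\omega_{\sigma}(\mu')=\omega_{\sigma\circ\tau^{d}}(\mu')$ from Theorem~\ref{The22} and the fact from Theorem~\ref{The10} that truncations and tails of progressive functions remain progressive.

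For the p-invariant part I would take $\mu\in\overline{W}(A)$ and let $\rho\in P_{n}$ be a witness with $\omega_{\rho}(\mu)\subset A$, i.e. property (\ref{sta9}). I claim this same $\rho$ realizes condition (\ref{inv1__}) for $\overline{W}(A)$, that is, the whole orbit $\Phi^{\rho}(\mu,\cdot)$ stays inside $\overline{W}(A)$. Indeed, fix $t'\in\mathbf{R}$ and set $\mu'=\Phi^{\rho}(\mu,t')$. By Theorem~\ref{The10} c) the truncation $\rho\cdot\chi_{(t',\infty)}$ is again progressive, and Remark~\ref{Rem23} gives $\omega_{\rho\cdot\chi_{(t',\infty)}}(\mu')=\omega_{\rho}(\mu)\subset A$. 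Hence $\mu'\in\overline{W}(A)$, and since $t'$ was arbitrary, $\overline{W}(A)$ satisfies (\ref{inv1__}) and is therefore p-invariant.

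For the n-invariant part I would take $\mu\in\underline{W}(A)$ and verify condition (\ref{inv2__}) for $\underline{W}(A)$: for every $\rho\in P_{n}$ and every $t'\in\mathbf{R}$, $\Phi^{\rho}(\mu,t')\in\underline{W}(A)$. Fixing such $\rho,t'$ and writing $\mu'=\Phi^{\rho}(\mu,t')$, I must show $\omega_{\sigma}(\mu')\subset A$ for every $\sigma\in P_{n}$. The idea is to splice: choose $d$ so large that $\sigma\circ\tau^{d}$ fires only strictly after $t'$, and define $\widetilde{\rho}$ to coincide with $\rho$ on $(-\infty,t']$ and with $\sigma\circ\tau^{d}$ on $(t',\infty)$. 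This $\widetilde{\rho}$ is progressive, being a finite $\rho$-prefix followed by the progressive tail coming from $\sigma\circ\tau^{d}$ (Theorems~\ref{The11} and~\ref{The10} a)). Then $\Phi^{\widetilde{\rho}}(\mu,t')=\mu'$ and $\widetilde{\rho}\cdot\chi_{(t',\infty)}=\sigma\circ\tau^{d}$, so Remark~\ref{Rem23} followed by Theorem~\ref{The22} yields $\omega_{\widetilde{\rho}}(\mu)=\omega_{\sigma\circ\tau^{d}}(\mu')=\omega_{\sigma}(\mu')$. Since $\mu\in\underline{W}(A)$ and $\widetilde{\rho}\in P_{n}$, property (\ref{sta14}) gives $\omega_{\widetilde{\rho}}(\mu)\subset A$, whence $\omega_{\sigma}(\mu')\subset A$. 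As $\sigma$, then $\rho$ and $t'$, are arbitrary, $\underline{W}(A)$ satisfies (\ref{inv2__}) and is n-invariant.

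The main obstacle is the splicing construction in the n-part: one must concatenate the finite initial segment of $\rho$ (its computations up to $t'$) with the time-shifted continuation $\sigma\circ\tau^{d}$ so that the result is a genuine element of $P_{n}$ whose truncation past $t'$ is exactly $\sigma\circ\tau^{d}$, and then confirm that the resulting $\omega$-limit set does not depend on the arbitrary initial segment. Checking these bookkeeping facts — progressivity of the spliced sequence, the identity $\widetilde{\rho}\cdot\chi_{(t',\infty)}=\sigma\circ\tau^{d}$, and the agreement of the $\omega$-limit sets — is where the care is needed; once they are in place, the conclusion follows directly from Remark~\ref{Rem23} and Theorem~\ref{The22}. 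The p-part, by contrast, is immediate, since a single witness $\rho$ serves simultaneously for the whole orbit.
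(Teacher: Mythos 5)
Your proof is correct. Note that the paper itself gives no proof of Theorem \ref{The140} (it is quoted from \cite{bib8}), so there is no in-paper argument to compare with line by line; however, your method is exactly the technique the paper uses when it proves the analogous invariance statements for $\overline{W}[\Phi^{\rho}(\mu,\cdot)]$, $\underline{W}[\Phi^{\rho}(\mu,\cdot)]$ and $\underline{W}[\omega_{\rho}(\mu)]$ in the theorem following Theorem \ref{The146}: there too the n-parts are handled by splicing a finite prefix of one progressive function with a shifted copy of another, $\rho''=\rho'\cdot\chi_{(-\infty,t_{1}]}\oplus\widetilde{\rho}\cdot\chi_{(t_{1},\infty)}$, followed by an appeal to (\ref{asterisc}) (Remark \ref{Rem23}) and Theorem \ref{The22}. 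Both of your parts are sound: in the p-part the single witness $\rho$ indeed serves the whole orbit, since Theorem \ref{The10} c) and Remark \ref{Rem23} give $\omega_{\rho\cdot\chi_{(t',\infty)}}(\Phi^{\rho}(\mu,t'))=\omega_{\rho}(\mu)\subset A$, verifying (\ref{inv1__}) for $\overline{W}(A)$; in the n-part the spliced $\widetilde{\rho}$ is progressive (finite prefix plus progressive tail), satisfies $\Phi^{\widetilde{\rho}}(\mu,t')=\Phi^{\rho}(\mu,t')$ because the flow up to time $t'$ depends only on the firings at times $\leq t'$, and has $\widetilde{\rho}\cdot\chi_{(t',\infty)}=\sigma\circ\tau^{d}$, so Remark \ref{Rem23} and Theorem \ref{The22} yield $\omega_{\sigma}(\mu')=\omega_{\widetilde{\rho}}(\mu)\subset A$ as you claim; and you correctly identify that the attractiveness hypotheses serve only to make the basins non-empty, so that they lie in $P^{\ast}(\mathbf{B}^{n})$ and Definition \ref{Def166} applies.
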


\begin{corollary}
\cite{bib8} If the set $A\in P^{\ast}(\mathbf{B}^{n})$ is p-invariant, then
$\overline{W}(A)$ is p-invariant and if $A$ is n-invariant, then the basin of
n-attraction $\underline{W}(A)$ is n-invariant.
\end{corollary}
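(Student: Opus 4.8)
The plan is to obtain both conclusions by simply chaining the two immediately preceding results, Theorem \ref{The106} and Theorem \ref{The140}; no new computation with orbits or $\omega$-limit sets is required. The statement has two perfectly parallel halves (the p-case and the n-case), so I would carry out the p-case in full and then remark that the n-case follows verbatim upon replacing every occurrence of `p' by `n'.

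For the p-case I would argue as follows. Suppose $A\in P^{\ast}(\mathbf{B}^{n})$ is p-invariant. The first step is to invoke Theorem \ref{The106}: p-invariance of $A$ implies that $A$ is p-attractive (in fact $A\subset\overline{W}(A)$, so in particular $\overline{W}(A)\neq\emptyset$). This is the genuine content of the argument, since it converts the invariance hypothesis into precisely the attractiveness hypothesis needed downstream. The second step is to apply Theorem \ref{The140}: from the p-attractiveness of $A$ just established, that theorem yields directly that $\overline{W}(A)$ is p-invariant, which is the first assertion.

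For the n-case the reasoning is identical: n-invariance of $A$ gives, by Theorem \ref{The106}, that $A$ is n-attractive, i.e. $\underline{W}(A)\neq\emptyset$; then Theorem \ref{The140} applied to the n-attractive set $A$ gives that $\underline{W}(A)$ is n-invariant.

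There is essentially no obstacle here beyond identifying the correct order of implications, as all the substantive work resides in Theorems \ref{The106} and \ref{The140}. The one point I would verify explicitly is that the attractiveness produced by Theorem \ref{The106} coincides with the attractiveness hypothesized in Theorem \ref{The140}, rather than merely resembling it: both amount to the single condition $\overline{W}(A)\neq\emptyset$ (respectively $\underline{W}(A)\neq\emptyset$), which is immediate from Definition \ref{Def168_}. Hence the two theorems compose without any gap, and the corollary follows.
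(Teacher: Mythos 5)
Your proof is correct and is precisely the argument the paper intends: the statement is placed as a corollary immediately after Theorems \ref{The106} and \ref{The140} because it follows by chaining them (invariance of $A$ gives attractiveness of $A$ by Theorem \ref{The106}, and attractiveness gives invariance of the basin by Theorem \ref{The140}), in both the p- and n-cases. Your additional check that ``attractive'' in both theorems means the same thing, namely $\overline{W}(A)\neq\emptyset$ (respectively $\underline{W}(A)\neq\emptyset$) as in Definition \ref{Def168_}, is a sound precaution and consistent with the paper.
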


\section{The basin of attraction of the fixed points}

\begin{notation}
For any point $\mu\in\mathbf{B}^{n}$ we use the simpler notations
$\overline{W}(\mu),$ $\underline{W}(\mu)$ instead of $\overline{W}(\{\mu\}),$
$\underline{W}(\{\mu\}).$ Furthermore, if the point $\mu$ is identified with
the $n-$tuple $(\mu_{1},...,\mu_{n}),$ it is usual to write $\overline{W}%
(\mu_{1},...,\mu_{n}),$ $\underline{W}(\mu_{1},...,\mu_{n})$ for these sets.
\end{notation}

\begin{remark}
This section is dedicated to the special case when in Definition \ref{Def123}
the set $A\in P^{\ast}(\mathbf{B}^{n})$ consists in a point $\mu,$ in other
words%
\begin{equation}
\overline{W}(\mu)=\{\mu^{\prime}|\mu^{\prime}\in\mathbf{B}^{n},\exists
\rho^{\prime}\in P_{n},\omega_{\rho^{\prime}}(\mu^{\prime})\subset\{\mu\}\},
\label{bf1}%
\end{equation}%
\begin{equation}
\underline{W}(\mu)=\{\mu^{\prime}|\mu^{\prime}\in\mathbf{B}^{n},\forall
\rho^{\prime}\in P_{n},\omega_{\rho^{\prime}}(\mu^{\prime})\subset\{\mu\}\}.
\label{bf2}%
\end{equation}
The fact that the point $\mu$ is chosen to be fixed is justified by the
\end{remark}

\begin{theorem}
$\overline{W}(\mu)\neq\emptyset\Longleftrightarrow\mu$ is a fixed point of
$\Phi$ and similarly, $\underline{W}(\mu)\neq\emptyset\Longleftrightarrow\mu$
is a fixed point of $\Phi.$
\end{theorem}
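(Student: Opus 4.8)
The plan is to prove the two biconditionals by reducing each to the elementary facts already established about $\omega$-limit sets and fixed points. The crucial preliminary observation is that the inclusion $\omega_{\rho'}(\mu')\subset\{\mu\}$ occurring in the definitions (\ref{bf1}), (\ref{bf2}) is in fact an equality $\omega_{\rho'}(\mu')=\{\mu\}$: by Theorem \ref{The20_} a) every $\omega$-limit set is non-empty, and a non-empty subset of the singleton $\{\mu\}$ must coincide with $\{\mu\}$.

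For the forward implication I would first treat the p-attraction case. Assuming $\overline{W}(\mu)\neq\emptyset$, pick $\mu'\in\overline{W}(\mu)$ together with a witness $\rho'\in P_{n}$ satisfying $\omega_{\rho'}(\mu')\subset\{\mu\}$. By the preliminary observation this gives $\omega_{\rho'}(\mu')=\{\mu\}$, and Theorem \ref{The20} b) then yields $\Phi(\mu)=\mu$. The n-attraction case follows at once from the trivial inclusion $\underline{W}(\mu)\subset\overline{W}(\mu)$ (the universal quantifier over $\rho'$ implies the existential one, as $P_{n}\neq\emptyset$): if $\underline{W}(\mu)\neq\emptyset$ then $\overline{W}(\mu)\neq\emptyset$, so $\mu$ is a fixed point by what was just shown.

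For the converse, suppose $\Phi(\mu)=\mu$. By Corollary \ref{Cor17} we have $\Phi^{\rho}(\mu,t)=\mu$ for every $\rho\in P_{n}$ and every $t\in\mathbf{R}$; consequently, along any sequence $(t_{k})\in Seq$ the orbit is constantly $\mu$, so $\omega_{\rho}(\mu)=\{\mu\}$ for every $\rho\in P_{n}$. Taking $\mu'=\mu$ in (\ref{bf1}) and (\ref{bf2}), this shows $\mu\in\underline{W}(\mu)\subset\overline{W}(\mu)$, whence both basins are non-empty. This settles the two biconditionals simultaneously.

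There is no serious obstacle here; the argument is a matter of correctly assembling the cited results. The only point demanding care is the quantifier asymmetry: the forward implication is naturally proved first for $\overline{W}$, where a single witness $\rho'$ suffices, and then transferred to $\underline{W}$ via the inclusion $\underline{W}(\mu)\subset\overline{W}(\mu)$, while the converse is naturally proved first for $\underline{W}$, since the fixed point is attracted under every $\rho$, and transferred to $\overline{W}$ through the same inclusion.
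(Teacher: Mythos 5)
Your proof is correct and takes essentially the same approach as the paper: the forward direction upgrades $\omega_{\rho^{\prime}}(\mu^{\prime})\subset\{\mu\}$ to equality using the non-emptiness of $\omega$-limit sets (Theorem \ref{The20_} a)) and then applies Theorem \ref{The20} b), while the converse uses Corollary \ref{Cor17} to conclude $\omega_{\rho}(\mu)=\{\mu\}$ for every $\rho$. The only cosmetic difference is that you handle both basins explicitly through the inclusion $\underline{W}(\mu)\subset\overline{W}(\mu)$, whereas the paper proves only the statement for $\overline{W}(\mu)$ and leaves the $\underline{W}(\mu)$ case implicit.
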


\begin{proof}
We prove the first statement. If $\mu^{\prime}\in\overline{W}(\mu),$ then
$\rho^{\prime}\in P_{n}$ exists such that $\omega_{\rho^{\prime}}(\mu^{\prime
})\subset\{\mu\}.$ In this case $\omega_{\rho^{\prime}}(\mu^{\prime})$ is
non-empty, thus $\omega_{\rho^{\prime}}(\mu^{\prime})=\{\mu\}$ and, from
Theorem \ref{The20} b), $\Phi(\mu)=\mu.$

Let us suppose now that $\Phi(\mu)=\mu.$ For any $\rho^{\prime}\in P_{n},$
$Or_{\rho^{\prime}}(\mu)=\omega_{\rho^{\prime}}(\mu)=\{\mu\}$ from Corollary
\ref{Cor17}, thus $\mu\in\overline{W}(\mu)$ and $\overline{W}(\mu
)\neq\emptyset.$
\end{proof}

\begin{remark}
In \cite{bib5} at page 5, the fixed point $x_{0}\in X$ is called attractive if
the neighborhood $U\subset X$ and $t^{\prime}>0$ exist such that%
\[
\forall x\in U,\forall t>t^{\prime},\Phi_{t}(x)\in U\text{ and }%
\underset{t\rightarrow\infty}{\lim}|\Phi_{t}(x)-x_{0}|=0.
\]

We give also the point of view from \cite{bib6} where, at page 110 it is said,
in a discrete time context, that the basin of attraction of an attractive
fixed point $x_{0}\in X$ is formed by the the set of all the initial points of
some sequences of iterates that converge to $x_{0}$.
\end{remark}

\begin{example}
In Figure \ref{fixed1_} we have the property that all the points are fixed
points and $\forall\mu\in\mathbf{B}^{2},\forall\rho\in P_{2},$
\begin{figure}
[ptb]
\begin{center}
\fbox{\includegraphics[
height=0.8691in,
width=1.3033in
]%
{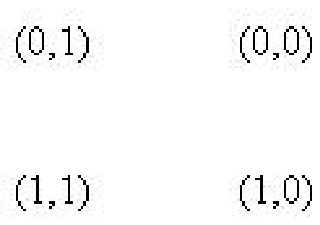}%
}\caption{The basins of attraction of the fixed points}%
\label{fixed1_}%
\end{center}
\end{figure}
\[
\overline{W}(\mu)=\underline{W}(\mu)=\{\mu\}.
\]
Any $\mu\in\mathbf{B}^{2}$ is partially p-attractive and partially n-attractive.
\end{example}

\begin{example}
The point $(1,0)$ is fixed in Figure \ref{fixed2} and
\begin{figure}
[ptb]
\begin{center}
\fbox{\includegraphics[
height=0.8821in,
width=1.241in
]%
{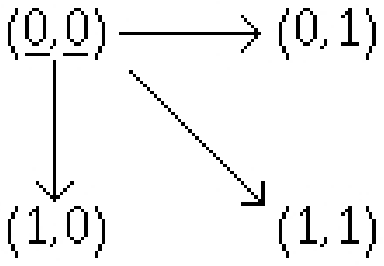}%
}\caption{The basins of attraction of the fixed points}%
\label{fixed2}%
\end{center}
\end{figure}
\[
\overline{W}(1,0)=\{(0,0),(1,0)\},
\]%
\[
\underline{W}(1,0)=\{(1,0)\}.
\]
The point $(1,0)$ is partially p-attractive and partially n-attractive.
\end{example}

\begin{example}
We have also the example when in Figure \ref{fixed3}:
\begin{figure}
[ptb]
\begin{center}
\fbox{\includegraphics[
height=0.8951in,
width=1.2324in
]%
{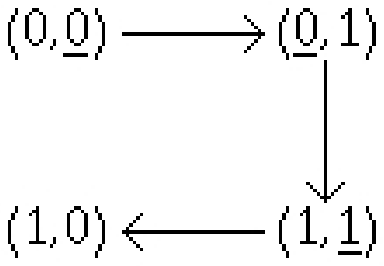}%
}\caption{The basins of attraction of the fixed points}%
\label{fixed3}%
\end{center}
\end{figure}
\[
\overline{W}(1,0)=\underline{W}(1,0)=\mathbf{B}^{2},
\]
thus the fixed point $(1,0)$ is totally p-attractive and totally n-attractive.
\end{example}

\begin{theorem}
\label{The76}Let $\mu\in\mathbf{B}^{n}$ be a fixed point of $\Phi$. The
following statements are true:

a) We have%
\[
\overline{W}(\mu)=\{\mu^{\prime}|\mu^{\prime}\in\mathbf{B}^{n},\exists
\rho^{\prime}\in P_{n},\underset{t\rightarrow\infty}{\lim}\Phi^{\rho^{\prime}%
}(\mu^{\prime},t)=\mu\},
\]%
\[
\underline{W}(\mu)=\{\mu^{\prime}|\mu^{\prime}\in\mathbf{B}^{n},\forall
\rho^{\prime}\in P_{n},\underset{t\rightarrow\infty}{\lim}\Phi^{\rho^{\prime}%
}(\mu^{\prime},t)=\mu\};
\]

b) $\{\mu\}\subset\underline{W}(\mu)\subset\overline{W}(\mu)$ thus $\mu$ is
p-attractive and n-attractive;

c) $\overline{W}(\mu)$ is p-invariant and $\underline{W}(\mu)$ is n-invariant.
\end{theorem}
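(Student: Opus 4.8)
The plan is to reduce every assertion to the characterizations of $\omega$-limit sets collected in Theorem \ref{The20} and to the attractiveness/invariance machinery of Theorems \ref{The106} and \ref{The140}, exploiting throughout that $\mu$ is a fixed point.

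For part a) I would start from the defining formulas (\ref{bf1}) and (\ref{bf2}), so that membership of $\mu^{\prime}$ in $\overline{W}(\mu)$ (resp. $\underline{W}(\mu)$) amounts to $\omega_{\rho^{\prime}}(\mu^{\prime})\subset\{\mu\}$ for some (resp. every) $\rho^{\prime}\in P_{n}$. The key observation is that $\omega_{\rho^{\prime}}(\mu^{\prime})$ is never empty, by Theorem \ref{The20_} a), so the inclusion $\omega_{\rho^{\prime}}(\mu^{\prime})\subset\{\mu\}$ is equivalent to the equality $\omega_{\rho^{\prime}}(\mu^{\prime})=\{\mu\}$. Then I would invoke Theorem \ref{The20}: by a), the condition $card(\omega_{\rho^{\prime}}(\mu^{\prime}))=1$ is equivalent to the existence of $\underset{t\rightarrow\infty}{\lim}\Phi^{\rho^{\prime}}(\mu^{\prime},t)$, and by b), whenever $\omega_{\rho^{\prime}}(\mu^{\prime})=\{\mu\}$ this limit equals $\mu$; conversely, if $\underset{t\rightarrow\infty}{\lim}\Phi^{\rho^{\prime}}(\mu^{\prime},t)=\mu$ then $\mu$ is the sole accumulation value and $\omega_{\rho^{\prime}}(\mu^{\prime})=\{\mu\}$. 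Chaining these equivalences replaces the $\omega$-limit condition by the limit condition $\underset{t\rightarrow\infty}{\lim}\Phi^{\rho^{\prime}}(\mu^{\prime},t)=\mu$ inside both the existential and the universal quantifier over $\rho^{\prime}$, which is exactly the claimed description of $\overline{W}(\mu)$ and $\underline{W}(\mu)$.

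For part b) the inclusion $\underline{W}(\mu)\subset\overline{W}(\mu)$ is immediate from Definition \ref{Def123}, since $P_{n}\neq\emptyset$ and a property holding for every $\rho^{\prime}$ holds for some $\rho^{\prime}$. To obtain $\{\mu\}\subset\underline{W}(\mu)$ I would use that $\mu$ is fixed: by Corollary \ref{Cor17}, $\Phi^{\rho^{\prime}}(\mu,t)=\mu$ for all $t\in\mathbf{R}$ and all $\rho^{\prime}\in P_{n}$, whence $\omega_{\rho^{\prime}}(\mu)=\{\mu\}\subset\{\mu\}$ for every $\rho^{\prime}$, i.e. $\mu\in\underline{W}(\mu)$. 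Because $\mu$ then lies in both basins, they are non-empty, which is by definition the p-attractiveness and n-attractiveness of $\mu$.

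Part c) is then a direct consequence of part b): since $\{\mu\}$ is p-attractive and n-attractive, Theorem \ref{The140} yields that $\overline{W}(\mu)$ is p-invariant and $\underline{W}(\mu)$ is n-invariant. Alternatively, since $\Phi(\mu)=\mu$ makes $\{\mu\}$ n-invariant by Theorem \ref{The81} a), hence also p-invariant, one may invoke the Corollary following Theorem \ref{The140}. I expect the only genuinely delicate step to be part a): one must be careful that the passage from the set-inclusion formulation to the limit formulation uses the non-emptiness of the $\omega$-limit set in an essential way, and that both the ``exists $\rho^{\prime}$'' and the ``for all $\rho^{\prime}$'' versions survive the translation intact.
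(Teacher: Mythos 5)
Your proposal is correct and follows essentially the same route as the paper's own proof: part a) by converting $\omega_{\rho^{\prime}}(\mu^{\prime})\subset\{\mu\}$ into $\omega_{\rho^{\prime}}(\mu^{\prime})=\{\mu\}$ (non-emptiness, Theorem \ref{The20_} a)) and then into the limit statement via Theorem \ref{The20} and Definition \ref{Def19}, part b) via Corollary \ref{Cor17}, and part c) via Theorem \ref{The140}. The only difference is presentational: you spell out the quantifier bookkeeping and the inclusion $\underline{W}(\mu)\subset\overline{W}(\mu)$, which the paper leaves implicit.
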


\begin{proof}
a) If $\mu^{\prime}\in\overline{W}(\mu),$ then $\rho^{\prime}\in P_{n}$ exists
such that $\omega_{\rho^{\prime}}(\mu^{\prime})=\{\mu\}.$ In this situation
from Theorem \ref{The20} b) we infer that $\underset{t\rightarrow\infty}{\lim
}\Phi^{\rho^{\prime}}(\mu^{\prime},t)=\mu,$ thus $\overline{W}(\mu
)\subset\{\mu^{\prime}|\mu^{\prime}\in\mathbf{B}^{n},\exists\rho^{\prime}\in
P_{n},\underset{t\rightarrow\infty}{\lim}\Phi^{\rho^{\prime}}(\mu^{\prime
},t)=\mu\}.$

Conversely, if $\mu^{\prime}\in\mathbf{B}^{n},\rho^{\prime}\in P_{n}$ exist
such that $\underset{t\rightarrow\infty}{\lim}\Phi^{\rho^{\prime}}(\mu
^{\prime},t)=\mu,$ then $\omega_{\rho^{\prime}}(\mu^{\prime})=\{\mu\}$ from
Definition \ref{Def19} and we get $\{\mu^{\prime}|\mu^{\prime}\in
\mathbf{B}^{n},\exists\rho^{\prime}\in P_{n},\underset{t\rightarrow\infty
}{\lim}\Phi^{\rho^{\prime}}(\mu^{\prime},t)=\mu\}\subset\overline{W}(\mu).$

b) The fact that $\mu\in\underline{W}(\mu)$ is a consequence of the fact that
$\forall\rho^{\prime}\in P_{n},\underset{t\rightarrow\infty}{\lim}\Phi
^{\rho^{\prime}}(\mu,t)=\mu$ (see Corollary \ref{Cor17}).

c) $\mu$ is p-attractive from b), thus $\overline{W}(\mu)$ is p-invariant
(Theorem \ref{The140}).
\end{proof}

\section{The basin of attraction of the orbits and of the $\omega-$limit sets}

\begin{definition}
Let be $\Phi:\mathbf{B}^{n}\rightarrow\mathbf{B}^{n},$ $\mu\in\mathbf{B}^{n}$
and $\rho\in P_{n}.$ We define the \textbf{basins }(or \textbf{kingdoms}, or
\textbf{domains}) \textbf{of p-attraction} of $\Phi^{\rho}(\mu,\cdot
),\omega_{\rho}(\mu)$ by%
\begin{equation}
\overline{W}[\Phi^{\rho}(\mu,\cdot)]=\{\mu^{\prime}|\mu^{\prime}\in
\mathbf{B}^{n},\exists\rho^{\prime}\in P_{n},\exists t^{\prime}\in
\mathbf{R},\forall t\geq t^{\prime}, \label{invo1}%
\end{equation}%
\[
\Phi^{\rho^{\prime}}(\mu^{\prime},t)=\Phi^{\rho}(\mu,t)\},
\]%
\begin{equation}
\overline{W}[\omega_{\rho}(\mu)]=\{\mu^{\prime}|\mu^{\prime}\in\mathbf{B}%
^{n},\exists\rho^{\prime}\in P_{n},\omega_{\rho^{\prime}}(\mu^{\prime}%
)=\omega_{\rho}(\mu)\} \label{invo2}%
\end{equation}
and the \textbf{basins }(or \textbf{kingdoms}, or \textbf{domains}) \textbf{of
n-attraction} of $\Phi^{\rho}(\mu,\cdot),$ $\omega_{\rho}(\mu)$ respectively
by%
\begin{equation}
\underline{W}[\Phi^{\rho}(\mu,\cdot)]=\{\mu^{\prime}|\mu^{\prime}\in
\mathbf{B}^{n},\forall\rho^{\prime}\in P_{n},\exists t^{\prime}\in
\mathbf{R},\forall t\geq t^{\prime}, \label{invo3}%
\end{equation}%
\[
\Phi^{\rho^{\prime}}(\mu^{\prime},t)=\Phi^{\rho}(\mu,t)\},
\]%
\begin{equation}
\underline{W}[\omega_{\rho}(\mu)]=\{\mu^{\prime}|\mu^{\prime}\in\mathbf{B}%
^{n},\forall\rho^{\prime}\in P_{n},\omega_{\rho^{\prime}}(\mu^{\prime}%
)=\omega_{\rho}(\mu)\}. \label{invo4}%
\end{equation}

\end{definition}

\begin{remark}
The attractiveness of the orbits and of the $\omega-$limit sets is defined in
the spirit of Definition \ref{Def168_} and it is their property of making one
of the previous basins of attraction non-empty.

We mention \cite{bib3}, page 133, where $M$ is a differentiable manifold
together with a distance $d$ on $M$ and a discrete time dynamical system is
generated by the $C^{r}-$diffeomorphism $\Phi:M\rightarrow M.$ The orbit
through $x_{0}\in M$ is called attractive if%
\begin{equation}
\exists\delta>0,\forall x\in B(x_{0},\delta),\underset{n\rightarrow\infty
}{\lim}d(\Phi_{n}(x),\Phi_{n}(x_{0}))=0, \label{asterisc3}%
\end{equation}
where $B(x_{0},\delta)$ is the notation for the open ball of center $x_{0}$
and radius $\delta$. In the same work \cite{bib3}, page 133 the orbit through
$x_{0}\in M$ is called stable if
\begin{equation}
\forall\varepsilon>0,\exists\delta(\varepsilon)>0,\forall x\in B(x_{0}%
,\delta),\forall n\in\mathbf{N},d(\Phi_{n}(x),\Phi_{n}(x_{0}))<\varepsilon.
\label{asterisc4}%
\end{equation}
The translation of (\ref{asterisc3}), (\ref{asterisc4}) in our framework gives
the statements%
\begin{equation}
\exists\mu^{\prime}\in\mathbf{B}^{n},\exists\rho^{\prime}\in P_{n}%
,\omega_{\rho^{\prime}}(\mu^{\prime})=\omega_{\rho}(\mu), \label{invo5_}%
\end{equation}%
\begin{equation}
\exists\mu^{\prime}\in\mathbf{B}^{n},\forall\rho^{\prime}\in P_{n}%
,\omega_{\rho^{\prime}}(\mu^{\prime})=\omega_{\rho}(\mu), \label{invo6_}%
\end{equation}%
\begin{equation}
\exists\mu^{\prime}\in\mathbf{B}^{n},\exists\rho^{\prime}\in P_{n},\forall
t\in\mathbf{R},\Phi^{\rho^{\prime}}(\mu^{\prime},t)=\Phi^{\rho}(\mu,t),
\label{invo5}%
\end{equation}%
\begin{equation}
\exists\mu^{\prime}\in\mathbf{B}^{n},\forall\rho^{\prime}\in P_{n},\forall
t\in\mathbf{R},\Phi^{\rho^{\prime}}(\mu^{\prime},t)=\Phi^{\rho}(\mu,t);
\label{invo6}%
\end{equation}
we note that (\ref{invo5_}), (\ref{invo6_}) are equivalent with $\overline
{W}[\omega_{\rho}(\mu)]\neq\emptyset,$ $\underline{W}[\omega_{\rho}(\mu
)]\neq\emptyset$ attractiveness, while (\ref{invo5}), (\ref{invo6}) are
stronger than the attractiveness properties $\overline{W}[\Phi^{\rho}%
(\mu,\cdot)]\neq\emptyset,$ $\underline{W}[\Phi^{\rho}(\mu,\cdot
)]\neq\emptyset$. On the other hand, if we take in (\ref{invo5_}) and
(\ref{invo5}) $\mu^{\prime}=\mu,\rho^{\prime}=\rho$ we get that these two
properties are always true, see Theorem \ref{The146} to follow, items a), b).

Note that the stability of the sets $A$ from the dynamical systems theory is
interpreted as invariance \cite{bib8}, while the stability of the orbits from
the dynamical systems theory is interpreted to be stronger than attractiveness.
\end{remark}

\begin{example}
In Figure \ref{invariance1_}
\begin{figure}
[ptb]
\begin{center}
\fbox{\includegraphics[
height=0.8864in,
width=1.324in
]%
{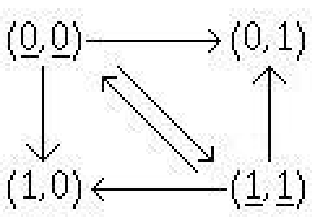}%
}\caption{The basins of attraction of the orbits and of the $\omega$-limit
sets}%
\label{invariance1_}%
\end{center}
\end{figure}
for%
\[
\rho(t)=(1,1)\cdot\chi_{\{0\}}(t)\oplus(1,1)\cdot\chi_{\{1\}}(t)\oplus
(1,1)\cdot\chi_{\{2\}}(t)\oplus...
\]
we get%
\[
\overline{W}[\Phi^{\rho}((0,0),\cdot)]=\overline{W}[\omega_{\rho
}((0,0))]=\{(0,0),(1,1)\},
\]%
\[
\underline{W}[\Phi^{\rho}((0,0),\cdot)]=\underline{W}[\omega_{\rho
}((0,0))]=\emptyset.
\]

\end{example}

\begin{example}
We take in Figure \ref{invariance3}
\begin{figure}
[ptb]
\begin{center}
\fbox{\includegraphics[
height=0.9037in,
width=1.241in
]%
{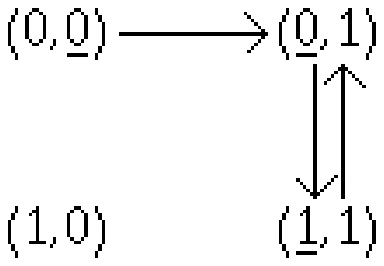}%
}\caption{The basins of attraction of the orbits and of the $\omega$-limit
sets}%
\label{invariance3}%
\end{center}
\end{figure}
\[
\rho(t)=(1,1)\cdot\chi_{\{0\}}(t)\oplus(1,1)\cdot\chi_{\{1\}}(t)\oplus
(1,1)\cdot\chi_{\{2\}}(t)\oplus...
\]
and we obtain%
\[
\overline{W}[\Phi^{\rho}((0,1),\cdot)]=\overline{W}[\omega_{\rho
}((0,1))]=\underline{W}[\omega_{\rho}((0,1))]=\{(0,0),(0,1),(1,1)\},
\]%
\[
\underline{W}[\Phi^{\rho}((0,1),\cdot)]=\emptyset.
\]

\end{example}

\begin{example}
In Figure \ref{invariance2_}
\begin{figure}
[ptb]
\begin{center}
\fbox{\includegraphics[
height=0.8864in,
width=1.324in
]%
{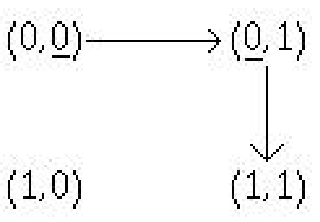}%
}\caption{The basins of attraction of the orbits and of the $\omega$-limit
sets}%
\label{invariance2_}%
\end{center}
\end{figure}
for%
\[
\rho(t)=(0,1)\cdot\chi_{\{0\}}(t)\oplus(1,1)\cdot\chi_{\{1\}}(t)\oplus
(0,1)\cdot\chi_{\{2\}}(t)\oplus(1,1)\cdot\chi_{\{3\}}(t)\oplus...
\]
we see that%
\[
\overline{W}[\Phi^{\rho}((0,0),\cdot)]=\overline{W}[\omega_{\rho
}((0,0))]=\underline{W}[\Phi^{\rho}((0,0),\cdot)]
\]%
\[
=\underline{W}[\omega_{\rho}((0,0))]=\{(0,0),(0,1),(1,1)\}.
\]

\end{example}

\begin{theorem}
\label{The146}We consider the point $\mu\in\mathbf{B}^{n}$ and the function
$\rho\in P_{n}.$

a) $\overline{W}[\Phi^{\rho}(\mu,\cdot)]=\overline{W}[\omega_{\rho}(\mu)];$

b) we have $Or_{\rho}(\mu)\subset\overline{W}[\Phi^{\rho}(\mu,\cdot)],$ thus
$\overline{W}[\Phi^{\rho}(\mu,\cdot)]$ is non-empty;

c) $\underline{W}[\Phi^{\rho}(\mu,\cdot)]\subset\underline{W}[\omega_{\rho
}(\mu)]$;

d) $\underline{W}[\Phi^{\rho}(\mu,\cdot)]\neq\emptyset\Longleftrightarrow
card(\omega_{\rho}(\mu))=1,$ thus $card(\omega_{\rho}(\mu))=1$ implies that
$\underline{W}[\Phi^{\rho}(\mu,\cdot)]$ and $\underline{W}[\omega_{\rho}%
(\mu)]$ are non-empty;

e) if $\exists\mu^{\prime}\in\mathbf{B}^{n},\omega_{\rho}(\mu)=\{\mu^{\prime
}\},$ then $\underline{W}[\Phi^{\rho}(\mu,\cdot)]=\underline{W}[\omega_{\rho
}(\mu)]=\underline{W}(\mu^{\prime}).$
\end{theorem}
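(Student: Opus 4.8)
The plan is to treat the five items by exploiting one recurring principle: the value $\omega_\rho(\mu)$ and the tail of the orbit $\Phi^\rho(\mu,\cdot)$ depend only on the tail of the schedule, together with the splicing and translation results of Theorem \ref{The10} and Theorem \ref{The11}. First I would dispose of the easy directions. If $\Phi^{\rho'}(\mu',t)=\Phi^\rho(\mu,t)$ for all $t\ge t'$, then evaluating Definition \ref{Def19} along any $(t_k)\in Seq$ (which is eventually $\ge t'$) shows $\omega_{\rho'}(\mu')=\omega_\rho(\mu)$; this gives the inclusion $\overline W[\Phi^\rho(\mu,\cdot)]\subseteq\overline W[\omega_\rho(\mu)]$ of a), and, read with $\forall\rho'$, gives c). For b), writing $\mu''=\Phi^\rho(\mu,t'')\in Or_\rho(\mu)$, Theorem \ref{The10} c),d) furnish $\rho'=\rho\cdot\chi_{(t'',\infty)}\in P_n$ with $\Phi^{\rho'}(\mu'',t)=\Phi^\rho(\mu,t)$ for $t\ge t''$, so $\mu''\in\overline W[\Phi^\rho(\mu,\cdot)]$; since $\mu\in Or_\rho(\mu)$ this also yields non-emptiness.

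The hard part will be the reverse inclusion $\overline W[\omega_\rho(\mu)]\subseteq\overline W[\Phi^\rho(\mu,\cdot)]$, because equality of $\omega$-limit sets does not by itself produce orbits that coincide with matching timing (the orbit need not be periodic, so phases cannot merely be aligned). Starting from $\rho'$ with $\omega_{\rho'}(\mu')=\omega_\rho(\mu)=:\Omega$, I would use recurrence. By Theorem \ref{The20_} c) the orbit of $\mu'$ reaches some $q\in\Omega$ at a time $r_0$; since $q\in\omega_\rho(\mu)$ and $\mathbf B^n$ is discrete, $\Phi^\rho(\mu,\cdot)$ attains the value $q$ at arbitrarily large times, so I may fix $\sigma>r_0$ with $\Phi^\rho(\mu,\sigma)=q$. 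Then I define $\rho''$ to follow $\rho'$ up to $r_0$, to compute nothing on $(r_0,\sigma]$ (so the state is held at $q$), and to coincide with $\rho$ on $(\sigma,\infty)$; its tail is $\rho\cdot\chi_{(\sigma,\infty)}\in P_n$ (Theorem \ref{The10} c)), and since progressiveness is a tail condition, $\rho''\in P_n$. Now $\Phi^{\rho''}(\mu',\sigma)=q=\Phi^\rho(\mu,\sigma)$ and $\rho''\cdot\chi_{(\sigma,\infty)}=\rho\cdot\chi_{(\sigma,\infty)}$, so two applications of Theorem \ref{The10} d) at $t'=\sigma$ (once for $\rho''$, once for $\rho$) give $\Phi^{\rho''}(\mu',t)=\Phi^\rho(\mu,t)$ for all $t\ge\sigma$, i.e. $\mu'\in\overline W[\Phi^\rho(\mu,\cdot)]$. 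This splicing-at-a-recurrent-point step is where I expect the real work.

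For d), the implication from $card(\omega_\rho(\mu))=1$ is short: writing $\omega_\rho(\mu)=\{\nu\}$, Theorem \ref{The20} b) makes $\nu$ a fixed point and gives $\Phi^\rho(\mu,t)=\nu$ for large $t$, while Corollary \ref{Cor17} gives $\Phi^{\rho'}(\nu,t)=\nu$ for all $\rho',t$; hence $\nu\in\underline W[\Phi^\rho(\mu,\cdot)]$, and by c) $\underline W[\omega_\rho(\mu)]$ is non-empty too. For the converse I would desynchronize by translation: if $\mu'\in\underline W[\Phi^\rho(\mu,\cdot)]$, the orbit of $\mu'$ eventually equals $\Phi^\rho(\mu,\cdot)$ for every $\rho'\in P_n$; applying this both to a fixed $\rho'_0$ and to its translate $\rho'_0\circ\tau^d$ (progressive, Theorem \ref{The11}), and using $\Phi^{\rho'_0\circ\tau^d}(\mu',t)=\Phi^{\rho'_0}(\mu',t-d)$, I obtain $\Phi^\rho(\mu,t)=\Phi^\rho(\mu,t-d)$ for all large $t$ and every $d>0$. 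Since $\Phi^\rho(\mu,\cdot)$ is a right-continuous step function with locally finite jump set, periodicity under two incommensurable shifts (say $d=1$ and $d=\sqrt2$) forces it to be eventually constant; thus $\lim_{t\to\infty}\Phi^\rho(\mu,t)$ exists and $card(\omega_\rho(\mu))=1$ by Theorem \ref{The20} a).

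Finally, for e) with $\omega_\rho(\mu)=\{\mu'\}$ (so $\mu'$ is a fixed point by Theorem \ref{The20} b)), I would show all three sets coincide with $\underline W(\mu')$. By Theorem \ref{The20} a), for any $\nu$ and $\rho''$ the condition $\omega_{\rho''}(\nu)=\{\mu'\}$ is equivalent to $\lim_t\Phi^{\rho''}(\nu,t)=\mu'$; quantifying over $\rho''$ turns $\underline W[\omega_\rho(\mu)]$ into exactly the description of $\underline W(\mu')$ given by Theorem \ref{The76} a). For $\underline W[\Phi^\rho(\mu,\cdot)]$, since $\Phi^\rho(\mu,t)=\mu'$ for all large $t$, enlarging the threshold $t'$ shows that eventual coincidence with $\Phi^\rho(\mu,\cdot)$ is equivalent to $\lim_t\Phi^{\rho''}(\nu,t)=\mu'$, again the defining property of $\underline W(\mu')$. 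Hence $\underline W[\Phi^\rho(\mu,\cdot)]=\underline W[\omega_\rho(\mu)]=\underline W(\mu')$.
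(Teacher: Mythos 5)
Your proposal is correct, and most of it runs parallel to the paper's own proof: parts b), c) and e) are essentially identical to what the paper does, and part a) differs only cosmetically. In a), where you splice by inserting an idle interval $(r_{0},\sigma]$ on which nothing is computed and then switching to $\rho$, the paper instead time-translates $\rho^{\prime}$ so that the common point $\mu^{\prime\prime}$ is reached by both schedules at the same instant, taking $\rho^{\prime\prime}(t)=\rho^{\prime}(t-t_{2}+t_{1})\cdot\chi_{(-\infty,t_{2}]}(t)\oplus\rho(t)\cdot\chi_{(t_{2},\infty)}(t)$; both constructions rest on Theorem \ref{The10} d), Theorem \ref{The11}, and the fact that progressiveness is a tail condition, so this is the same splicing idea in two dialects.

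The genuine divergence is the forward implication of d). The paper's argument is a disjoint-support trick: it builds one specific schedule $\rho^{\prime}$ whose jump times are the midpoints $t_{k}^{\prime}=\frac{t_{k}+t_{k+1}}{2}$ of the jump times of $\rho$, so the discontinuity points of $\Phi^{\rho^{\prime}}(\widetilde{\mu},\cdot)$ and of $\Phi^{\rho}(\mu,\cdot)$ lie in disjoint sets; eventual equality of the two orbits then forces the common tail to have no discontinuities at all, hence to be constant, hence $card(\omega_{\rho}(\mu))=1$. You instead test the hypothesis against all translates $\rho_{0}^{\prime}\circ\tau^{d}$ and conclude that $\Phi^{\rho}(\mu,\cdot)$ is eventually $d$-periodic for every $d$, then invoke incommensurability of $1$ and $\sqrt{2}$. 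Your route is more conceptual (it exposes the stronger fact that the tail of the orbit is asymptotically invariant under every time shift), but it costs an extra lemma which you assert rather than prove: that a step function with locally finite jump set admitting two incommensurable eventual periods is eventually constant. That lemma is true and can be filled in a few lines --- by $1$-periodicity, the jump times beyond the thresholds lie, modulo $1$, in the finite set of jump positions inside one period window, while a single jump at $s$ would propagate to jumps at $s+n\sqrt{2}$ for all $n\in\mathbf{N}$, whose residues modulo $1$ are pairwise distinct because $\sqrt{2}$ is irrational, a contradiction --- but as written this is the one step of your argument that is not yet a proof. The paper's midpoint construction avoids any such arithmetic and is the shorter of the two; your version generalizes more readily, since it never needs to design a special schedule, only to exploit closure of $P_{n}$ under translation.
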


\begin{proof}
a) Let $\mu^{\prime}\in\overline{W}[\Phi^{\rho}(\mu,\cdot)]$ be arbitrary, for
which $\rho^{\prime}\in P_{n},t^{\prime}\in\mathbf{R}$ exist such that%
\begin{equation}
\forall t\geq t^{\prime},\Phi^{\rho^{\prime}}(\mu^{\prime},t)=\Phi^{\rho}%
(\mu,t), \label{*_}%
\end{equation}%
\begin{equation}
\omega_{\rho^{\prime}}(\mu^{\prime})=\{\Phi^{\rho^{\prime}}(\mu^{\prime
},t)|t\geq t^{\prime}\}, \label{**_}%
\end{equation}%
\begin{equation}
\omega_{\rho}(\mu)=\{\Phi^{\rho}(\mu,t)|t\geq t^{\prime}\}. \label{***_}%
\end{equation}
(\ref{*_}) is fulfilled from the definition (\ref{invo1}) of $\overline
{W}[\Phi^{\rho}(\mu,\cdot)]$ and, by taking $t^{\prime}$ sufficiently great,
(\ref{**_}), (\ref{***_}) are fulfilled too (Theorem \ref{The20_}). As
$\omega_{\rho^{\prime}}(\mu^{\prime})=\omega_{\rho}(\mu)$ we get $\mu^{\prime
}\in\overline{W}[\omega_{\rho}(\mu)]$ and because $\mu^{\prime}$ was
arbitrary, we infer that $\overline{W}[\Phi^{\rho}(\mu,\cdot)]\subset
\overline{W}[\omega_{\rho}(\mu)]$.

Conversely, let $\mu^{\prime}\in\overline{W}[\omega_{\rho}(\mu)]$ be
arbitrary, thus%
\[
\exists\rho^{\prime}\in P_{n},\omega_{\rho^{\prime}}(\mu^{\prime}%
)=\omega_{\rho}(\mu)
\]
and let $\mu^{\prime\prime}\in\omega_{\rho^{\prime}}(\mu^{\prime}%
)=\omega_{\rho}(\mu)$ be some point,%
\begin{equation}
\Phi^{\rho^{\prime}}(\mu^{\prime},t_{1})=\Phi^{\rho}(\mu,t_{2})=\mu
^{\prime\prime}, \label{bas40}%
\end{equation}
$t_{1},t_{2}\in\mathbf{R}$. The function%
\begin{equation}
\rho^{\prime\prime}(t)=\rho^{\prime}(t-t_{2}+t_{1})\cdot\chi_{(-\infty,t_{2}%
]}(t)\oplus\rho(t)\cdot\chi_{(t_{2},\infty)}(t) \label{bas41}%
\end{equation}
is progressive and fulfills%
\[
\Phi^{\rho^{\prime\prime}}(\mu^{\prime},t_{2})\overset{(\ref{bas41})}{=}%
\Phi^{\rho^{\prime}\circ\tau^{t_{2}-t_{1}}}(\mu^{\prime},t_{2})\overset
{Theorem\ \ref{The11}}{=}\Phi^{\rho^{\prime}}(\mu^{\prime},t_{1}%
)\overset{(\ref{bas40})}{=}\Phi^{\rho}(\mu,t_{2})\overset{(\ref{bas40})}{=}%
\mu^{\prime\prime},
\]%
\[
\forall t>t_{2},\Phi^{\rho^{\prime\prime}}(\mu^{\prime},t)\overset
{Theorem\ \ref{The10}\ d)}{=}\Phi^{\rho^{\prime\prime}\cdot\chi_{(t_{2}%
,\infty)}}(\mu^{\prime\prime},t)
\]%
\[
\overset{(\ref{bas41})}{=}\Phi^{\rho\cdot\chi_{(t_{2},\infty)}}(\mu
^{\prime\prime},t)\overset{Theorem\ \ref{The10}\ d)}{=}\Phi^{\rho}(\mu,t)
\]
in other words $\mu^{\prime}\in\overline{W}[\Phi^{\rho}(\mu,\cdot)].$ The fact
that $\mu^{\prime}$ was arbitrary gives the conclusion that $\overline
{W}[\omega_{\rho}(\mu)]\subset\overline{W}[\Phi^{\rho}(\mu,\cdot)].$

b) Let $\mu^{\prime}\in Or_{\rho}(\mu)$ be arbitrary, thus $\exists t^{\prime
}\in\mathbf{R}$ with $\mu^{\prime}=\Phi^{\rho}(\mu,t^{\prime}).$ We get%
\[
\forall t\geq t^{\prime},\Phi^{\rho\cdot\chi_{(t^{\prime},\infty)}}%
(\mu^{\prime},t)\overset{Theorem\ \ref{The10}\ d)}{=}\Phi^{\rho}(\mu,t).
\]
We have shown that $\mu^{\prime}\in\overline{W}[\Phi^{\rho}(\mu,\cdot)]$ and
as $\mu^{\prime}$ was arbitrarily chosen, we infer $Or_{\rho}(\mu
)\subset\overline{W}[\Phi^{\rho}(\mu,\cdot)].$

c) Let $\mu^{\prime}\in\underline{W}[\Phi^{\rho}(\mu,\cdot)]$ and
$\rho^{\prime}\in P_{n}$ be arbitrary. Some sufficiently great $t^{\prime}%
\in\mathbf{R}$ exists such that%
\[
\forall t\geq t^{\prime},\Phi^{\rho^{\prime}}(\mu^{\prime},t)=\Phi^{\rho}%
(\mu,t)
\]
and we have%
\[
\omega_{\rho^{\prime}}(\mu^{\prime})=\{\Phi^{\rho^{\prime}}(\mu^{\prime
},t)|t\geq t^{\prime}\}=\{\Phi^{\rho}(\mu,t)|t\geq t^{\prime}\}=\omega_{\rho
}(\mu),
\]
i.e. $\mu^{\prime}\in\underline{W}[\omega_{\rho}(\mu)].$

d) $\Longrightarrow$ Let be $\rho$ given by%
\[
\rho(t)=\alpha^{0}\cdot\chi_{\{t_{0}\}}(t)\oplus...\oplus\alpha^{k}\cdot
\chi_{\{t_{k}\}}(t)\oplus...,
\]
$\alpha\in\Pi_{n},(t_{k})\in Seq$ and we define%
\[
\rho^{\prime}(t)=\alpha^{0}\cdot\chi_{\{t_{0}^{\prime}\}}(t)\oplus
...\oplus\alpha^{k}\cdot\chi_{\{t_{k}^{\prime}\}}(t)\oplus...,
\]
where%
\begin{equation}
t_{k}^{\prime}=\frac{t_{k}+t_{k+1}}{2},k\in\mathbf{N} \label{bas1}%
\end{equation}
belongs to $Seq$. We call point of discontinuity of $\Phi^{\rho}(\mu,\cdot)$ a
point $\xi\in\mathbf{R}$ with the property that $\mu^{\prime},\mu
^{\prime\prime}\in\mathbf{B}^{n}$ and $\varepsilon>0$ exist such that%
\[
\forall t\in(\xi-\varepsilon,\xi),\Phi^{\rho}(\mu,t)=\mu^{\prime},
\]%
\[
\forall t\in\lbrack\xi,\xi+\varepsilon),\Phi^{\rho}(\mu,t)=\mu^{\prime\prime
},
\]%
\[
\mu^{\prime}\neq\mu^{\prime\prime}.
\]
The hypothesis states that $\widetilde{\mu}\in\underline{W}[\Phi^{\rho}%
(\mu,\cdot)]$ exists fulfilling the property%
\begin{equation}
\exists t^{\prime}\in\mathbf{R},\forall t\geq t^{\prime},\Phi^{\rho^{\prime}%
}(\widetilde{\mu},t)=\Phi^{\rho}(\mu,t). \label{bas1_}%
\end{equation}
Let us suppose against all reason that $card(\omega_{\rho^{\prime}}%
(\widetilde{\mu}))=card(\omega_{\rho}(\mu))>1$ and%
\[
\omega_{\rho^{\prime}}(\widetilde{\mu})=\{\Phi^{\rho^{\prime}}(\widetilde{\mu
},t)|t\geq t^{\prime\prime}\}=\{\Phi^{\rho}(\mu,t)|t\geq t^{\prime\prime
}\}=\omega_{\rho}(\mu),
\]
$t^{\prime\prime}\geq t^{\prime}.$ Then equation (\ref{bas1_}) is
contradictory, since the discontinuity points of $\Phi^{\rho^{\prime}%
}(\widetilde{\mu},\cdot)_{|[t^{\prime},\infty)}$\footnote{$\Phi^{\rho^{\prime
}}(\widetilde{\mu},\cdot)_{|[t^{\prime},\infty)}$ is the notation for the
restriction of $\Phi^{\rho^{\prime}}(\widetilde{\mu},\cdot):\mathbf{R}%
\rightarrow\mathbf{B}^{n}$ to $[t^{\prime},\infty).$
\par
{}} and $\Phi^{\rho}(\mu,\cdot)_{|[t^{\prime},\infty)}$ are included in the
disjoint sets $[t^{\prime},\infty)\cap(t_{k}^{\prime})$ and $[t^{\prime
},\infty)\cap(t_{k}).$ The conclusion is that $card(\omega_{\rho^{\prime}%
}(\widetilde{\mu}))=card(\omega_{\rho}(\mu))=1$ and in (\ref{bas1_}) the
disjoint sets $[t^{\prime},\infty)\cap(t_{k}^{\prime})$ and $[t^{\prime
},\infty)\cap(t_{k})$ contain no discontinuity points.

$\Longleftarrow$ We presume that $\mu^{\prime}\in\mathbf{B}^{n}$ exists with
$\omega_{\rho}(\mu)=\{\mu^{\prime}\}$ and then for an arbitrary $\rho^{\prime
}\in P_{n}$ we get $\forall t\in\mathbf{R},\Phi^{\rho^{\prime}}(\mu^{\prime
},t)=\mu^{\prime}.$ As $\underset{t\rightarrow\infty}{\lim}\Phi^{\rho}%
(\mu,t)=\mu^{\prime},$ we conclude that $\exists t^{\prime}\in\mathbf{R}$ such
that $\forall t\geq t^{\prime},$%
\[
\Phi^{\rho}(\mu,t)=\Phi^{\rho^{\prime}}(\mu^{\prime},t)=\mu^{\prime},
\]
thus $\mu^{\prime}\in\underline{W}[\Phi^{\rho}(\mu,\cdot)].$

e) The fact that $\omega_{\rho}(\mu)=\{\mu^{\prime}\}$ shows that $\mu
^{\prime}$ is a fixed point of $\Phi$ (Theorem \ref{The20} b)), thus
$\underline{W}(\mu^{\prime})\neq\emptyset.$ $\underline{W}[\Phi^{\rho}%
(\mu,\cdot)],$ $\underline{W}[\omega_{\rho}(\mu)]$ and $\underline{W}%
(\mu^{\prime})$ are all equal with the set%
\[
\{\mu^{\prime\prime}|\mu^{\prime\prime}\in\mathbf{B}^{n},\forall\rho^{\prime
}\in P_{n},\underset{t\rightarrow\infty}{\lim}\Phi^{\rho^{\prime}}(\mu
^{\prime\prime},t)=\mu^{\prime}\}.
\]

\end{proof}

\begin{theorem}
For any $\mu\in\mathbf{B}^{n}$ and any $\rho\in P_{n},$

a) the basin of p-attraction $\overline{W}[\Phi^{\rho}(\mu,\cdot)]$ is p-invariant;

b) if $\Phi^{\rho}(\mu,\cdot)$ is n-attractive, then the basin of n-attraction
$\underline{W}[\Phi^{\rho}(\mu,\cdot)]$ is n-invariant;

c) if $\omega_{\rho}(\mu)$ is n-attractive, then $\underline{W}[\omega_{\rho
}(\mu)]$ is n-invariant.
\end{theorem}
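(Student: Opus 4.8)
The plan is to handle the three parts separately, reducing b) to the fixed-point theory already developed and treating a) and c) by direct constructions with progressive sequences. For a), I would verify the p-invariance criterion (\ref{inv2}) of Theorem \ref{The14}: given $\mu^{\prime}\in\overline{W}[\Phi^{\rho}(\mu,\cdot)]$ I must exhibit a single $\rho^{\prime\prime}\in P_{n}$ with $Or_{\rho^{\prime\prime}}(\mu^{\prime})\subset\overline{W}[\Phi^{\rho}(\mu,\cdot)]$. By the definition (\ref{invo1}) there are $\rho^{\prime}\in P_{n}$ and $t^{\prime}$ with $\Phi^{\rho^{\prime}}(\mu^{\prime},t)=\Phi^{\rho}(\mu,t)$ for $t\geq t^{\prime}$, and I claim $\rho^{\prime\prime}=\rho^{\prime}$ works. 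Indeed, for any $\nu=\Phi^{\rho^{\prime}}(\mu^{\prime},s)$ on the orbit, Theorem \ref{The10} d) gives $\Phi^{\rho^{\prime}}(\mu^{\prime},t)=\Phi^{\rho^{\prime}\cdot\chi_{(s,\infty)}}(\nu,t)$ for $t\geq s$, so for $t\geq\max(s,t^{\prime})$ this equals $\Phi^{\rho}(\mu,t)$; since $\rho^{\prime}\cdot\chi_{(s,\infty)}\in P_{n}$ by Theorem \ref{The10} c), this shows $\nu\in\overline{W}[\Phi^{\rho}(\mu,\cdot)]$, as required.

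For b), n-attractiveness means $\underline{W}[\Phi^{\rho}(\mu,\cdot)]\neq\emptyset$, which by Theorem \ref{The146} d) forces $card(\omega_{\rho}(\mu))=1$, say $\omega_{\rho}(\mu)=\{\mu^{\ast}\}$. Then $\mu^{\ast}$ is a fixed point of $\Phi$ by Theorem \ref{The20} b), Theorem \ref{The146} e) identifies $\underline{W}[\Phi^{\rho}(\mu,\cdot)]=\underline{W}(\mu^{\ast})$, and Theorem \ref{The76} c) states precisely that this set is n-invariant.

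The crux is c). Here $\underline{W}[\omega_{\rho}(\mu)]$ need not coincide with any $\underline{W}(A)$, because its defining condition demands the equality $\omega_{\rho^{\prime}}(\mu^{\prime})=\omega_{\rho}(\mu)$ rather than mere inclusion, so I cannot simply quote Theorem \ref{The140}; a direct concatenation argument is needed. Writing $A=\underline{W}[\omega_{\rho}(\mu)]$ and $C=\omega_{\rho}(\mu)$, I would verify criterion (\ref{inv2_}): fix $\mu^{\prime}\in A$, an arbitrary $\rho^{\prime\prime}\in P_{n}$, and a point $\nu=\Phi^{\rho^{\prime\prime}}(\mu^{\prime},s)\in Or_{\rho^{\prime\prime}}(\mu^{\prime})$; the goal is $\omega_{\sigma}(\nu)=C$ for every $\sigma\in P_{n}$. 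Modelling the construction on the proof of Theorem \ref{The146} a), I would choose $d$ large enough that all jumps of $\sigma\circ\tau^{d}$ lie in $(s,\infty)$ and set
\[
\widetilde{\rho}(t)=\rho^{\prime\prime}(t)\cdot\chi_{(-\infty,s]}(t)\oplus(\sigma\circ\tau^{d})(t)\cdot\chi_{(s,\infty)}(t).
\]
Then $\widetilde{\rho}\in P_{n}$ (its tail is a full shifted copy of $\sigma$, so every coordinate still fires infinitely often, using Theorem \ref{The10} a) and Theorem \ref{The11}), and $\Phi^{\widetilde{\rho}}(\mu^{\prime},s)=\nu$ since $\widetilde{\rho}=\rho^{\prime\prime}$ on $(-\infty,s]$. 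Because the shifted jumps all lie beyond $s$ we have $\widetilde{\rho}\cdot\chi_{(s,\infty)}=\sigma\circ\tau^{d}$, so applying Remark \ref{Rem23} at $t^{\prime\prime}=s$ and then Theorem \ref{The22} yields $\omega_{\widetilde{\rho}}(\mu^{\prime})=\omega_{\widetilde{\rho}\cdot\chi_{(s,\infty)}}(\nu)=\omega_{\sigma\circ\tau^{d}}(\nu)=\omega_{\sigma}(\nu)$. But $\mu^{\prime}\in A$ forces $\omega_{\widetilde{\rho}}(\mu^{\prime})=C$, whence $\omega_{\sigma}(\nu)=C$. As $\sigma$ is arbitrary $\nu\in A$, and as $\nu$ and $\rho^{\prime\prime}$ are arbitrary $Or_{\rho^{\prime\prime}}(\mu^{\prime})\subset A$ for every $\rho^{\prime\prime}$, i.e. $A$ is n-invariant.

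The main obstacle I anticipate is the bookkeeping in c): confirming that $\widetilde{\rho}$ is genuinely in $P_{n}$ and that its restriction $\widetilde{\rho}\cdot\chi_{(s,\infty)}$ exactly reproduces $\sigma\circ\tau^{d}$, so that the Remark \ref{Rem23}/Theorem \ref{The22} chain applies. Both points hinge on placing every shifted jump strictly beyond $s$, and if $s$ happens to be a jump time of $\rho^{\prime\prime}$ one may freely move $s$ within the interval on which $\Phi^{\rho^{\prime\prime}}(\mu^{\prime},\cdot)$ is constantly $\nu$, so the value $\nu$ is unaffected and no jump collides at the splice point.
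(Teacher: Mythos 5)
Your proof is correct. Parts a) and c) are essentially the paper's own arguments: in a) you use the same single witness $\rho^{\prime}$ together with Theorem \ref{The10} c), d), and in c) you perform the same splice construction the paper uses --- keep the jumps of $\rho^{\prime\prime}$ up to the splice time, attach a copy of $\sigma$ shifted so that all its jumps fall strictly beyond it, then close the computation with Remark \ref{Rem23} and Theorem \ref{The22} (your $s$, $\sigma$, $\sigma\circ\tau^{d}$, $\widetilde{\rho}$ are the paper's $t_{1}$, $\rho^{\prime\prime\prime}$, $\widetilde{\rho}$, $\rho^{\prime\prime}$). The only small omission is in a): by Definition \ref{Def166}, invariance is a property of sets in $P^{\ast}(\mathbf{B}^{n})$, so you should record that $\overline{W}[\Phi^{\rho}(\mu,\cdot)]\neq\emptyset$; this is immediate from Theorem \ref{The146} b), and the paper states it explicitly before starting the orbit argument.

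Part b) is where you genuinely diverge, and your route is valid. The paper proves b) by the same direct concatenation used in c): for $\mu^{\prime}$ in the basin and arbitrary $\rho^{\prime}$, $t_{1}$, $\rho^{\prime\prime\prime}$, it applies the defining property of $\mu^{\prime}$ to the spliced function $\rho^{\prime\prime}=\rho^{\prime}\cdot\chi_{(-\infty,t_{1}]}\oplus\rho^{\prime\prime\prime}\cdot\chi_{(t_{1},\infty)}$ and unwinds with Theorem \ref{The10} d), never invoking any characterization of when the basin is non-empty (n-attractiveness only guarantees the set lies in $P^{\ast}(\mathbf{B}^{n})$). You instead route through the structure theory: non-emptiness plus Theorem \ref{The146} d) forces $\omega_{\rho}(\mu)=\{\mu^{\ast}\}$, Theorem \ref{The20} b) makes $\mu^{\ast}$ a fixed point, Theorem \ref{The146} e) identifies $\underline{W}[\Phi^{\rho}(\mu,\cdot)]=\underline{W}(\mu^{\ast})$, and Theorem \ref{The76} c) (which rests on Theorem \ref{The140}) gives n-invariance. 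Your argument is shorter and exposes the real content of b) --- an n-attractive orbit basin is just the n-basin of a fixed point --- but it leans on the hard direction of Theorem \ref{The146} d) (the discontinuity-point argument) and on the previously imported Theorem \ref{The140}; the paper's version is self-contained, uniform with c), and needs only Theorem \ref{The10} d).
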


\begin{proof}
a) From $Or_{\rho}(\mu)\neq\emptyset$ and $Or_{\rho}(\mu)\subset\overline
{W}[\Phi^{\rho}(\mu,\cdot)],$ see Theorem \ref{The146} b), we have that
$\overline{W}[\Phi^{\rho}(\mu,\cdot)]\neq\emptyset.$ Let $\mu^{\prime}%
\in\overline{W}[\Phi^{\rho}(\mu,\cdot)]$ be arbitrary$,$ meaning that%
\[
\exists\rho^{\prime}\in P_{n},\exists t^{\prime}\in\mathbf{R},\forall t\geq
t^{\prime},\Phi^{\rho^{\prime}}(\mu^{\prime},t)=\Phi^{\rho}(\mu,t)
\]
and we prove the inclusion $Or_{\rho^{\prime}}(\mu^{\prime})\subset
\overline{W}[\Phi^{\rho}(\mu,\cdot)].$ Indeed, the points $\mu^{\prime\prime}$
of the orbit $Or_{\rho^{\prime}}(\mu^{\prime})$ are of the form
\[
\exists t_{1}\in\mathbf{R},\mu^{\prime\prime}=\Phi^{\rho^{\prime}}(\mu
^{\prime},t_{1})
\]
and they fulfill%
\[
\forall t\geq t_{1},\Phi^{\rho^{\prime}}(\mu^{\prime},t)\overset
{Theorem\ \ref{The10}\ d)}{=}\Phi^{\rho^{\prime}\cdot\chi_{(t_{1},\infty)}%
}(\mu^{\prime\prime},t),
\]
thus%
\[
\forall t\geq\max\{t^{\prime},t_{1}\},\Phi^{\rho}(\mu,t)=\Phi^{\rho^{\prime}%
}(\mu^{\prime},t)=\Phi^{\rho^{\prime}\cdot\chi_{(t_{1},\infty)}}(\mu
^{\prime\prime},t).
\]
We infer that $\mu^{\prime\prime}\in\overline{W}[\Phi^{\rho}(\mu,\cdot)].$

b) We must prove any of the following three equivalent statements:%
\begin{equation}
\forall\mu^{\prime}\in\underline{W}[\Phi^{\rho}(\mu,\cdot)],\forall
\rho^{\prime}\in P_{n},Or_{\rho^{\prime}}(\mu^{\prime})\subset\underline
{W}[\Phi^{\rho}(\mu,\cdot)],
\end{equation}%
\begin{equation}
\forall\mu^{\prime}\in\underline{W}[\Phi^{\rho}(\mu,\cdot)],\forall
\rho^{\prime}\in P_{n},\forall t_{1}\in\mathbf{R},\Phi^{\rho^{\prime}}%
(\mu^{\prime},t_{1})\in\underline{W}[\Phi^{\rho}(\mu,\cdot)],
\end{equation}%
\begin{equation}
\forall\mu^{\prime}\in\mathbf{B}^{n},\forall\rho^{\prime\prime}\in
P_{n},\exists t^{\prime}\in\mathbf{R},\forall t\geq t^{\prime},\Phi
^{\rho^{\prime\prime}}(\mu^{\prime},t)=\Phi^{\rho}(\mu,t)\Longrightarrow
\label{bas6}%
\end{equation}%
\[
\Longrightarrow\forall\rho^{\prime}\in P_{n},\forall t_{1}\in\mathbf{R}%
,\forall\rho^{\prime\prime\prime}\in P_{n},\exists t^{\prime\prime}%
\in\mathbf{R},
\]%
\[
\forall t\geq t^{\prime\prime},\Phi^{\rho^{\prime\prime\prime}}(\Phi
^{\rho^{\prime}}(\mu^{\prime},t_{1}),t)=\Phi^{\rho}(\mu,t).
\]
For this, let $\mu^{\prime}\in\mathbf{B}^{n}$ be arbitrary, fixed, making the
following property true:%
\begin{equation}
\forall\rho^{\prime\prime}\in P_{n},\exists t^{\prime}\in\mathbf{R},\forall
t\geq t^{\prime},\Phi^{\rho^{\prime\prime}}(\mu^{\prime},t)=\Phi^{\rho}(\mu,t)
\label{bas7}%
\end{equation}
and we take $\rho^{\prime}\in P_{n},t_{1}\in\mathbf{R},\rho^{\prime
\prime\prime}\in P_{n}$ arbitrarily. The truth of (\ref{bas7}) for%
\begin{equation}
\rho^{\prime\prime}=\rho^{\prime}\cdot\chi_{(-\infty,t_{1}]}\oplus\rho
^{\prime\prime\prime}\cdot\chi_{(t_{1},\infty)} \label{bas2}%
\end{equation}
shows the existence of $t^{\prime}\in\mathbf{R}$ that we choose $>t_{1}$ such
that $\forall t\geq t^{\prime},$%
\[
\Phi^{\rho}(\mu,t)\overset{(\ref{bas7})}{=}\Phi^{\rho^{\prime\prime}}%
(\mu^{\prime},t)\overset{Theorem\ \ref{The10}\ d)}{=}\Phi^{\rho^{\prime\prime
}\cdot\chi_{(t_{1},\infty)}}(\Phi^{\rho^{\prime\prime}}(\mu^{\prime}%
,t_{1}),t)=
\]%
\[
\overset{(\ref{bas2})}{=}\Phi^{\rho^{\prime\prime\prime}\cdot\chi
_{(t_{1},\infty)}}(\Phi^{\rho^{\prime\prime}}(\mu^{\prime},t_{1}%
),t)\overset{(\ref{bas2})}{=}\Phi^{\rho^{\prime\prime\prime}\cdot\chi
_{(t_{1},\infty)}}(\Phi^{\rho^{\prime}}(\mu^{\prime},t_{1}),t)
\]%
\[
\overset{Theorem\ \ref{The10}\ d)}{=}\Phi^{\rho^{\prime\prime\prime}}%
(\Phi^{\rho^{\prime}}(\mu^{\prime},t_{1}),t).
\]
(\ref{bas6}) holds.

c) We must prove any of the equivalent statements:%
\begin{equation}
\forall\mu^{\prime}\in\underline{W}[\omega_{\rho}(\mu)],\forall\rho^{\prime
}\in P_{n},Or_{\rho^{\prime}}(\mu^{\prime})\subset\underline{W}[\omega_{\rho
}(\mu)],
\end{equation}%
\begin{equation}
\forall\mu^{\prime}\in\underline{W}[\omega_{\rho}(\mu)],\forall\rho^{\prime
}\in P_{n},\forall t_{1}\in\mathbf{R},\Phi^{\rho^{\prime}}(\mu^{\prime}%
,t_{1})\in\underline{W}[\omega_{\rho}(\mu)],
\end{equation}%
\begin{equation}
\forall\mu^{\prime}\in\mathbf{B}^{n},\forall\rho^{\prime\prime}\in
P_{n},\omega_{\rho^{\prime\prime}}(\mu^{\prime})=\omega_{\rho}(\mu
)\Longrightarrow\label{bas8}%
\end{equation}%
\[
\Longrightarrow\forall\rho^{\prime}\in P_{n},\forall t_{1}\in\mathbf{R}%
,\forall\rho^{\prime\prime\prime}\in P_{n},\omega_{\rho^{\prime\prime\prime}%
}(\Phi^{\rho^{\prime}}(\mu^{\prime},t_{1}))=\omega_{\rho}(\mu).
\]
Let $\mu^{\prime}\in\mathbf{B}^{n}$ be arbitrary, fixed, that fulfills%
\begin{equation}
\forall\rho^{\prime\prime}\in P_{n},\omega_{\rho^{\prime\prime}}(\mu^{\prime
})=\omega_{\rho}(\mu) \label{bas9}%
\end{equation}
and we take some arbitrary $\rho^{\prime}\in P_{n},t_{1}\in\mathbf{R}%
,\rho^{\prime\prime\prime}\in P_{n}$. A number $t_{2}$ exists with the
property $\forall t\leq t_{2},\rho^{\prime\prime\prime}(t)=0.$ We define
$\widetilde{\rho}\in P_{n}$ by%
\[
\widetilde{\rho}(t)=\rho^{\prime\prime\prime}(t-t_{1}+t_{2})
\]
and we note that (as far as $\forall t\leq t_{1},t-t_{1}+t_{2}\leq t_{2}$) we
have $\widetilde{\rho}\cdot\chi_{(t_{1},\infty)}(t)=\rho^{\prime\prime\prime
}(t-t_{1}+t_{2})=(\rho^{\prime\prime\prime}\circ\tau^{t_{1}-t_{2}})(t),$ thus
\begin{equation}
\omega_{\widetilde{\rho}\cdot\chi_{(t_{1},\infty)}}(\Phi^{\rho^{\prime}}%
(\mu^{\prime},t_{1}))=\omega_{\rho^{\prime\prime\prime}\circ\tau^{t_{1}-t_{2}%
}}(\Phi^{\rho^{\prime}}(\mu^{\prime},t_{1}))=\omega_{\rho^{\prime\prime\prime
}}(\Phi^{\rho^{\prime}}(\mu^{\prime},t_{1})), \label{bas3}%
\end{equation}
see Theorem \ref{The22}. The truth of (\ref{bas9}) for%
\begin{equation}
\rho^{\prime\prime}=\rho^{\prime}\cdot\chi_{(-\infty,t_{1}]}\oplus
\widetilde{\rho}\cdot\chi_{(t_{1},\infty)} \label{bas4}%
\end{equation}
shows that%
\[
\omega_{\rho}(\mu)\overset{(\ref{bas9})}{=}\omega_{\rho^{\prime\prime}}%
(\mu^{\prime})\overset{(\ref{bas4})}{=}\omega_{\rho^{\prime}\cdot
\chi_{(-\infty,t_{1}]}\oplus\widetilde{\rho}\cdot\chi_{(t_{1},\infty)}}%
(\mu^{\prime})=
\]%
\[
\overset{(\ref{asterisc})}{=}\omega_{\widetilde{\rho}\cdot\chi_{(t_{1}%
,\infty)}}(\Phi^{\rho^{\prime}}(\mu^{\prime},t_{1}))\overset{(\ref{bas3})}%
{=}\omega_{\rho^{\prime\prime\prime}}(\Phi^{\rho^{\prime}}(\mu^{\prime}%
,t_{1})).
\]
The statement (\ref{bas8}) was proved.
\end{proof}

\begin{theorem}
Let $\mu\in\mathbf{B}^{n}$ be a fixed point of $\Phi$ and $\rho\in P_{n}$. We
have%
\[
\overline{W}(\mu)=\overline{W}[\Phi^{\rho}(\mu,\cdot)]=\overline{W}%
[\omega_{\rho}(\mu)],
\]%
\[
\underline{W}(\mu)=\underline{W}[\Phi^{\rho}(\mu,\cdot)]=\underline{W}%
[\omega_{\rho}(\mu)].
\]

\end{theorem}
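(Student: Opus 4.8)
The plan is to reduce everything to one observation: a fixed point has a constant orbit, so its $\omega$-limit set is a singleton, and the general basin definitions for orbits and $\omega$-limit sets collapse onto the point-basin definitions of Section 6.

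First I would record the basic identity. Since $\Phi(\mu)=\mu$, Corollary \ref{Cor17} gives $\Phi^{\rho}(\mu,t)=\mu$ for every $t\in\mathbf{R}$; hence $Or_{\rho}(\mu)=\{\mu\}$ and therefore $\omega_{\rho}(\mu)=\{\mu\}$, the $\omega$-limit set of a constant orbit. This single equality $\omega_{\rho}(\mu)=\{\mu\}$ is the engine of the whole argument, and it is where the fixed-point hypothesis is used.

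Next, for the p-attraction row, I would invoke Theorem \ref{The146} a) to obtain $\overline{W}[\Phi^{\rho}(\mu,\cdot)]=\overline{W}[\omega_{\rho}(\mu)]$ with no extra work. It then remains to identify $\overline{W}[\omega_{\rho}(\mu)]$ with $\overline{W}(\mu)$. Substituting $\omega_{\rho}(\mu)=\{\mu\}$ into the defining condition (\ref{invo2}) turns it into $\omega_{\rho'}(\mu')=\{\mu\}$, whereas the definition (\ref{bf1}) of $\overline{W}(\mu)$ uses $\omega_{\rho'}(\mu')\subset\{\mu\}$. These two conditions coincide because every $\omega$-limit set is non-empty (Theorem \ref{The20_} a), so a subset of the singleton $\{\mu\}$ must equal $\{\mu\}$. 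This closes the first displayed line.

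For the n-attraction row, the very same singleton/non-emptiness argument applied to (\ref{invo4}) and (\ref{bf2}) gives $\underline{W}[\omega_{\rho}(\mu)]=\underline{W}(\mu)$. The only delicate equality is $\underline{W}[\Phi^{\rho}(\mu,\cdot)]=\underline{W}[\omega_{\rho}(\mu)]$, since in the general theory Theorem \ref{The146} c) supplies merely the inclusion $\underline{W}[\Phi^{\rho}(\mu,\cdot)]\subset\underline{W}[\omega_{\rho}(\mu)]$; this is where I expect the main subtlety to lie. Rather than try to upgrade that inclusion by hand, I would appeal to Theorem \ref{The146} e): because $\omega_{\rho}(\mu)=\{\mu\}$ is a singleton, taking $\mu'=\mu$ in that item yields $\underline{W}[\Phi^{\rho}(\mu,\cdot)]=\underline{W}[\omega_{\rho}(\mu)]=\underline{W}(\mu)$ at once. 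Thus it is precisely the fixed-point hypothesis, forcing a one-point $\omega$-limit set, that bridges the gap left open by Theorem \ref{The146} c) and completes the second displayed line.
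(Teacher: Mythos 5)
Your proof is correct, and it takes a recognizably different route from the paper's. The paper argues by direct substitution into the definitions: using $\Phi^{\rho}(\mu,t)=\mu$ and $\omega_{\rho}(\mu)=\{\mu\}$, it shows that for each $\mu'$ the three defining conditions (\ref{sta9}), (\ref{invo1}), (\ref{invo2}) become pointwise equivalent (an $\omega$-limit set contained in $\{\mu\}$, a flow eventually equal to $\mu$, an $\omega$-limit set equal to $\{\mu\}$), concludes the overline chain from this, and leaves the underline chain to the analogous argument with $\forall\rho'$ in place of $\exists\rho'$. You instead delegate to the general machinery of Theorem \ref{The146}: item a) for $\overline{W}[\Phi^{\rho}(\mu,\cdot)]=\overline{W}[\omega_{\rho}(\mu)]$, the non-emptiness of $\omega$-limit sets (Theorem \ref{The20_} a)) to collapse the condition $\omega_{\rho'}(\mu')\subset\{\mu\}$ into $\omega_{\rho'}(\mu')=\{\mu\}$, and item e) with witness $\mu'=\mu$ for the underline chain. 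Your route buys economy and, in fact, more explicit coverage than the paper gives: Theorem \ref{The146} e) yields the entire second displayed line in one stroke, whereas the paper treats it only implicitly, and you correctly diagnose why the one-sided inclusion of item c) is not an obstacle. One small redundancy: once you invoke item e), your separate collapsing of (\ref{invo4}) against (\ref{bf2}) is superfluous, since e) already delivers $\underline{W}[\omega_{\rho}(\mu)]=\underline{W}(\mu)$. Both arguments run on the same engine --- the equality $\omega_{\rho}(\mu)=\{\mu\}$ coming from Corollary \ref{Cor17} --- but the paper's proof is self-contained at the level of the definitions, while yours reuses previously established structure.
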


\begin{proof}
Because%
\[
\forall t\in\mathbf{R},\Phi^{\rho}(\mu,t)=\mu,
\]%
\[
\omega_{\rho}(\mu)=\{\mu\}
\]
we get for any $\mu^{\prime}\in\mathbf{B}^{n}$ the equivalence of the
statements%
\[
\exists\rho^{\prime}\in P_{n},\omega_{\rho^{\prime}}(\mu^{\prime})\subset
\{\mu\}\text{ (equation (\ref{sta9}))},
\]%
\[
\exists\rho^{\prime}\in P_{n},\exists t^{\prime}\in\mathbf{R},\forall t\geq
t^{\prime},\Phi^{\rho^{\prime}}(\mu^{\prime},t)=\mu\text{ (see (\ref{invo1}%
))},
\]%
\[
\exists\rho^{\prime}\in P_{n},\omega_{\rho^{\prime}}(\mu^{\prime}%
)=\{\mu\}\text{ (see (\ref{invo2}))}%
\]
meaning that $\mu^{\prime}\in\overline{W}(\mu),\mu^{\prime}\in\overline
{W}[\Phi^{\rho}(\mu,\cdot)],\mu^{\prime}\in\overline{W}[\omega_{\rho}(\mu)]$,
thus $\overline{W}(\mu)=\overline{W}[\Phi^{\rho}(\mu,\cdot)]=\overline
{W}[\omega_{\rho}(\mu)].$
\end{proof}

\begin{theorem}
\label{The105}Let be $\mu\in\mathbf{B}^{n}$ and $\rho\in P_{n}$. The following
statements hold:

a) $\overline{W}[\Phi^{\rho}(\mu,\cdot)]=\overline{W}(Or_{\rho}(\mu));$

b) $\underline{W}[\Phi^{\rho}(\mu,\cdot)]\subset\underline{W}(Or_{\rho}%
(\mu));$

c) $\overline{W}[\omega_{\rho}(\mu)]=\overline{W}(\omega_{\rho}(\mu));$

d) $\underline{W}[\omega_{\rho}(\mu)]\subset\underline{W}(\omega_{\rho}%
(\mu)).$
\end{theorem}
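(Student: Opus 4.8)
The plan is to unwind the definitions and reduce each of the four parts to the relationship between orbits, $\omega$-limit sets, and their basins as captured in the various membership conditions. Recall that $\overline{W}(A)$ collects those $\mu'$ for which some $\rho'$ makes $\omega_{\rho'}(\mu')\subset A$, while $\overline{W}[\Phi^{\rho}(\mu,\cdot)]$ and $\overline{W}[\omega_{\rho}(\mu)]$ were already shown to coincide in Theorem \ref{The146} a). So for parts a) and c) the real content is comparing $\overline{W}[\omega_{\rho}(\mu)]$ against $\overline{W}(Or_{\rho}(\mu))$ and $\overline{W}(\omega_{\rho}(\mu))$ respectively.

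First I would handle part c), since it is the cleanest. Take $\mu'\in\overline{W}[\omega_{\rho}(\mu)]$; by definition there is $\rho'$ with $\omega_{\rho'}(\mu')=\omega_{\rho}(\mu)$, which in particular gives $\omega_{\rho'}(\mu')\subset\omega_{\rho}(\mu)$, hence $\mu'\in\overline{W}(\omega_{\rho}(\mu))$. For the reverse inclusion, suppose $\mu'\in\overline{W}(\omega_{\rho}(\mu))$, so some $\rho'$ satisfies $\omega_{\rho'}(\mu')\subset\omega_{\rho}(\mu)$. The key step is to promote this inclusion to an equality. I expect to exploit Theorem \ref{The20_} e) together with Remark \ref{Rem23}: once the orbit of $\mu'$ under $\rho'$ enters $\omega_{\rho}(\mu)$ at some point $\mu''$, I can splice $\rho'$ with a shifted copy of $\rho$ (exactly as in the proof of Theorem \ref{The146} a), via equations (\ref{bas40}) and (\ref{bas41})) to build a $\widetilde{\rho}$ whose $\omega$-limit set equals the whole of $\omega_{\rho}(\mu)$, giving membership in $\overline{W}[\omega_{\rho}(\mu)]$. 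Part a) then follows by combining this with Theorem \ref{The146} a) and observing that $\omega_{\rho}(\mu)\subset Or_{\rho}(\mu)$ (Theorem \ref{The20_} b)), together with the fact that any $\rho'$ sending $\mu'$ eventually into $Or_{\rho}(\mu)$ can be redirected along $\rho$'s tail so that its $\omega$-limit coincides with $\omega_{\rho}(\mu)$.

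For the two n-attraction inclusions b) and d), the plan is more direct, as only one inclusion is claimed in each case. Take $\mu'\in\underline{W}[\Phi^{\rho}(\mu,\cdot)]$: for every $\rho'$ there is a $t'$ beyond which $\Phi^{\rho'}(\mu',t)=\Phi^{\rho}(\mu,t)$, so the tails of the two orbits agree, forcing $\{\Phi^{\rho'}(\mu',t)\mid t\geq t'\}=\{\Phi^{\rho}(\mu,t)\mid t\geq t'\}\subset Or_{\rho}(\mu)$. Applying Theorem \ref{The20_} c) to realize $\omega_{\rho'}(\mu')$ as a tail set gives $\omega_{\rho'}(\mu')\subset Or_{\rho}(\mu)$ for every $\rho'$, which is exactly $\mu'\in\underline{W}(Or_{\rho}(\mu))$. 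Part d) is analogous, replacing the orbit-tail containment with the equality $\omega_{\rho'}(\mu')=\omega_{\rho}(\mu)$ coming from the definition (\ref{invo4}) of $\underline{W}[\omega_{\rho}(\mu)]$, which immediately yields $\omega_{\rho'}(\mu')\subset\omega_{\rho}(\mu)$ for all $\rho'$, i.e.\ $\mu'\in\underline{W}(\omega_{\rho}(\mu))$.

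The main obstacle I anticipate is the reverse (``$\supset$'') inclusion in part c), namely upgrading the mere containment $\omega_{\rho'}(\mu')\subset\omega_{\rho}(\mu)$ to the equality $\omega_{\rho'}(\mu')=\omega_{\rho}(\mu)$ required by definition (\ref{invo2}). The subtlety is that $\omega_{\rho'}(\mu')$ might a priori capture only part of $\omega_{\rho}(\mu)$, so I cannot simply reuse $\rho'$; I must construct a new progressive sequence that, after synchronizing at a common point of the two $\omega$-limit sets, follows $\rho$ faithfully and thereby sweeps out all of $\omega_{\rho}(\mu)$. Verifying that this spliced sequence is genuinely progressive and that the splice preserves the orbit (invoking Theorems \ref{The10} d), \ref{The11}, and \ref{The22}) is where the careful bookkeeping lies, but the template already appears in the proof of Theorem \ref{The146}, so I expect the argument to go through by adaptation rather than by a genuinely new idea.
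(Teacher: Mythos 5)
Your proposal is correct, and for parts a), c) and d) it is essentially the paper's own proof: the easy inclusions are the same one-liners (equality of $\omega$-limit sets implies inclusion; $\omega_{\rho}(\mu)\subset Or_{\rho}(\mu)$ plus Theorem \ref{The146} a) for the forward direction of a)), and the hard inclusions $\overline{W}(Or_{\rho}(\mu))\subset\overline{W}[\Phi^{\rho}(\mu,\cdot)]$ and $\overline{W}(\omega_{\rho}(\mu))\subset\overline{W}[\omega_{\rho}(\mu)]$ are obtained by exactly the splice you describe, $\rho^{\prime\prime}(t)=\rho^{\prime}(t-t_{2}+t_{1})\cdot\chi_{(-\infty,t_{2}]}(t)\oplus\rho(t)\cdot\chi_{(t_{2},\infty)}(t)$, anchored at a common point $\mu^{\prime\prime}$ of $\omega_{\rho^{\prime}}(\mu^{\prime})$ and $Or_{\rho}(\mu)$, with Theorems \ref{The11} and \ref{The10} d) doing the bookkeeping (the paper phrases the part-c) version as a proof by contradiction, but the construction is identical to yours, see (\ref{bas13})--(\ref{bas14})). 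The one place you genuinely depart from the paper is part b). The paper argues structurally: if $\mu^{\prime}\in\underline{W}[\Phi^{\rho}(\mu,\cdot)]$ then this basin is non-empty, so $card(\omega_{\rho}(\mu))=1$ by Theorem \ref{The146} d), say $\omega_{\rho}(\mu)=\{\mu^{\prime\prime}\}$; Theorem \ref{The146} e) then identifies $\underline{W}[\Phi^{\rho}(\mu,\cdot)]=\underline{W}(\mu^{\prime\prime})$, and monotonicity (Theorem \ref{The25} ii)) with $\{\mu^{\prime\prime}\}\subset Or_{\rho}(\mu)$ finishes. You instead argue pointwise: for each $\rho^{\prime}$ the tails of $\Phi^{\rho^{\prime}}(\mu^{\prime},\cdot)$ and $\Phi^{\rho}(\mu,\cdot)$ eventually coincide, so by Theorem \ref{The20_} b), c) the set $\omega_{\rho^{\prime}}(\mu^{\prime})$ sits inside a common tail and hence inside $Or_{\rho}(\mu)$. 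Both are valid; your version is more elementary and self-contained (it just reruns the mechanism of the paper's proof of Theorem \ref{The146} c), needing nothing about fixed points), whereas the paper's route makes visible the stronger structural fact that n-attractiveness of an orbit forces $\omega_{\rho}(\mu)$ to be a singleton, so that $\underline{W}[\Phi^{\rho}(\mu,\cdot)]$ is always either empty or a fixed-point basin.
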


\begin{proof}
a) We prove that $\overline{W}[\Phi^{\rho}(\mu,\cdot)]\subset\overline
{W}(Or_{\rho}(\mu))$ and let $\mu^{\prime}\in\overline{W}[\Phi^{\rho}%
(\mu,\cdot)]$ be arbitrary, thus $\mu^{\prime}\in\overline{W}[\omega_{\rho
}(\mu)]$ (Theorem \ref{The146} a)). We get $\exists\rho^{\prime}\in
P_{n},\omega_{\rho^{\prime}}(\mu^{\prime})=\omega_{\rho}(\mu)\subset Or_{\rho
}(\mu)$ and finally $\mu^{\prime}\in\overline{W}(Or_{\rho}(\mu)).$

We prove now that $\overline{W}(Or_{\rho}(\mu))\subset\overline{W}[\Phi^{\rho
}(\mu,\cdot)].$ We presume that $\mu^{\prime}\in\overline{W}(Or_{\rho}(\mu)),$
i.e. $\exists\rho^{\prime}\in P_{n},\omega_{\rho^{\prime}}(\mu^{\prime
})\subset Or_{\rho}(\mu).$ Let $\mu^{\prime\prime}\in\omega_{\rho^{\prime}%
}(\mu^{\prime})$ be arbitrary. $t_{1}\in\mathbf{R},t_{2}\in\mathbf{R}$ exist
then such that%
\begin{equation}
\Phi^{\rho^{\prime}}(\mu^{\prime},t_{1})=\Phi^{\rho}(\mu,t_{2})=\mu
^{\prime\prime} \label{bas21}%
\end{equation}
and we define%
\begin{equation}
\rho^{\prime\prime}(t)=\rho^{\prime}(t-t_{2}+t_{1})\cdot\chi_{(-\infty,t_{2}%
]}(t)\oplus\rho(t)\cdot\chi_{(t_{2},\infty)}(t). \label{bas22}%
\end{equation}
We note that $\rho^{\prime\prime}\in P_{n}.$ We have%
\begin{equation}
\forall t\leq t_{2},\Phi^{\rho^{\prime\prime}}(\mu^{\prime},t)\overset
{(\ref{bas22})}{=}\Phi^{\rho^{\prime}\circ\tau^{t_{2}-t_{1}}}(\mu^{\prime
},t)\overset{Theorem\ \ref{The11}}{=}\Phi^{\rho^{\prime}}(\mu^{\prime}%
,t-t_{2}+t_{1}), \label{bas23}%
\end{equation}%
\begin{equation}
\Phi^{\rho^{\prime\prime}}(\mu^{\prime},t_{2})\overset{(\ref{bas23})}{=}%
\Phi^{\rho^{\prime}}(\mu^{\prime},t_{1})\overset{(\ref{bas21})}{=}\mu
^{\prime\prime}, \label{bas24}%
\end{equation}
thus $\forall t>t_{2},$%
\[
\Phi^{\rho^{\prime\prime}}(\mu^{\prime},t)\overset{(\ref{bas24})}{=}\Phi
^{\rho^{\prime\prime}\cdot\chi_{(t_{2},\infty)}}(\mu^{\prime\prime}%
,t)\overset{(\ref{bas22})}{=}\Phi^{\rho\cdot\chi_{(t_{2},\infty)}}(\mu
^{\prime\prime},t)=\Phi^{\rho}(\mu,t).
\]
We have proved the fact that $\mu^{\prime}\in\overline{W}[\Phi^{\rho}%
(\mu,\cdot)],$ thus $\overline{W}(Or_{\rho}(\mu))\subset\overline{W}%
[\Phi^{\rho}(\mu,\cdot)].$

b) Let $\mu^{\prime}\in\underline{W}[\Phi^{\rho}(\mu,\cdot)]$ be arbitrary, in
other words $\exists\mu^{\prime\prime}\in\mathbf{B}^{n}$ such that
$\omega_{\rho}(\mu)=\{\mu^{\prime\prime}\}$ (Theorem \ref{The146} d)). We
infer that $\mu^{\prime\prime}\in Or_{\rho}(\mu)$ and%
\[
\underline{W}[\Phi^{\rho}(\mu,\cdot)]\overset{Theorem\text{ }\ref{The146}%
\text{ }e)}{=}\underline{W}(\mu^{\prime\prime})\overset{Theorem\ \ref{The25}%
\ ii)}{\subset}\underline{W}(Or_{\rho}(\mu)).
\]

c) For any $\mu^{\prime}\in\overline{W}[\omega_{\rho}(\mu)]$ we have
$\exists\rho^{\prime}\in P_{n},\omega_{\rho^{\prime}}(\mu^{\prime}%
)=\omega_{\rho}(\mu),$ thus $\exists\rho^{\prime}\in P_{n},\omega
_{\rho^{\prime}}(\mu^{\prime})\subset\omega_{\rho}(\mu)$ proving that
$\mu^{\prime}\in\overline{W}(\omega_{\rho}(\mu))$ and the conclusion is that%
\begin{equation}
\overline{W}[\omega_{\rho}(\mu)]\subset\overline{W}(\omega_{\rho}(\mu)).
\label{bas10}%
\end{equation}
In order to show that the inclusion (\ref{bas10}) takes place under the form
of an equality, we presume against all reason that $\mu^{\prime}\in
\overline{W}(\omega_{\rho}(\mu))\setminus\overline{W}[\omega_{\rho}(\mu)]$
exists, wherefrom we get%
\begin{equation}
\exists\rho^{\prime}\in P_{n},\omega_{\rho^{\prime}}(\mu^{\prime}%
)\subset\omega_{\rho}(\mu), \label{bas11}%
\end{equation}%
\begin{equation}
\forall\rho^{\prime\prime}\in P_{n},\omega_{\rho^{\prime\prime}}(\mu^{\prime
})\neq\omega_{\rho}(\mu). \label{bas12}%
\end{equation}
From (\ref{bas11}), $t_{1}\in\mathbf{R},t_{2}\in\mathbf{R},\mu^{\prime\prime
}\in\mathbf{B}^{n}$ exist such that%
\begin{equation}
\Phi^{\rho^{\prime}}(\mu^{\prime},t_{1})=\Phi^{\rho}(\mu,t_{2})=\mu
^{\prime\prime} \label{bas13}%
\end{equation}
and we define%
\begin{equation}
\rho^{\prime\prime}(t)=\rho^{\prime}(t-t_{2}+t_{1})\cdot\chi_{(-\infty,t_{2}%
]}(t)\oplus\rho(t)\cdot\chi_{(t_{2},\infty)}(t). \label{bas14}%
\end{equation}
Obviously $\rho^{\prime\prime}\in P_{n}.$ We have%
\begin{equation}
\forall t\leq t_{2},\Phi^{\rho^{\prime\prime}}(\mu^{\prime},t)\overset
{(\ref{bas14})}{=}\Phi^{\rho^{\prime}\circ\tau^{t_{2}-t_{1}}}(\mu^{\prime
},t)\overset{Theorem\ \ref{The11}}{=}\Phi^{\rho^{\prime}}(\mu^{\prime}%
,t-t_{2}+t_{1}), \label{bas15}%
\end{equation}%
\begin{equation}
\Phi^{\rho^{\prime\prime}}(\mu^{\prime},t_{2})\overset{(\ref{bas15})}{=}%
\Phi^{\rho^{\prime}}(\mu^{\prime},t_{1})\overset{(\ref{bas13})}{=}\mu
^{\prime\prime}, \label{bas16}%
\end{equation}
thus $\forall t>t_{2},$%
\[
\Phi^{\rho^{\prime\prime}}(\mu^{\prime},t)\overset{(\ref{bas16})}{=}\Phi
^{\rho^{\prime\prime}\cdot\chi_{(t_{2},\infty)}}(\mu^{\prime\prime}%
,t)\overset{(\ref{bas14})}{=}\Phi^{\rho\cdot\chi_{(t_{2},\infty)}}(\mu
^{\prime\prime},t)\overset{Theorem\ \ref{The10}\ d)}{=}\Phi^{\rho}(\mu,t).
\]
We have obtained $\omega_{\rho^{\prime\prime}}(\mu^{\prime})=\omega_{\rho}%
(\mu),$ contradiction with (\ref{bas12}).

d) We take an arbitrary $\mu^{\prime}\in\underline{W}[\omega_{\rho}(\mu)].$
The truth of%
\[
\forall\rho^{\prime}\in P_{n},\omega_{\rho^{\prime}}(\mu^{\prime}%
)=\omega_{\rho}(\mu)
\]
implies that%
\[
\forall\rho^{\prime}\in P_{n},\omega_{\rho^{\prime}}(\mu^{\prime}%
)\subset\omega_{\rho}(\mu)
\]
is true thus $\mu^{\prime}\in\underline{W}(\omega_{\rho}(\mu)).$
\end{proof}

\begin{example}
In Figure \ref{minimality2}
\begin{figure}
[ptb]
\begin{center}
\fbox{\includegraphics[
height=0.9115in,
width=1.3327in
]%
{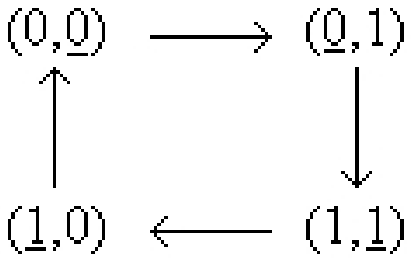}%
}\caption{Showing that the inclusion $\underline{W}[\Phi^{\rho}(\mu
,\cdot)]\subset\underline{W}(Or_{\rho}(\mu))$ is not equality}%
\label{minimality2}%
\end{center}
\end{figure}
for any $\rho\in P_{2}$ we have that $\underline{W}[\Phi^{\rho}((0,0),\cdot
)]=\emptyset,$ $Or_{\rho}(0,0)=\mathbf{B}^{2},$ $\underline{W}(\mathbf{B}%
^{2})=\mathbf{B}^{2}$ and the inclusion from Theorem \ref{The105} b) is not equality.
\end{example}

\begin{example}
In Figure \ref{transitivity3}
\begin{figure}
[ptb]
\begin{center}
\fbox{\includegraphics[
height=0.921in,
width=1.3534in
]%
{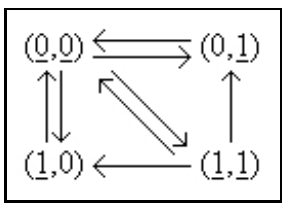}%
}\caption{Showing that the inclusion $\underline{W}[\omega_{\rho}(\mu
)]\subset\underline{W}(\omega_{\rho}(\mu))$ is not equality}%
\label{transitivity3}%
\end{center}
\end{figure}
we take $\mu=(0,0),\omega_{\rho}(\mu)=\mathbf{B}^{2}$ so that $\underline
{W}(\omega_{\rho}(\mu))=\underline{W}(\mathbf{B}^{2})=\mathbf{B}^{2}.$ On the
other hand we can see that $\underline{W}[\omega_{\rho}(\mu)]=\emptyset,$
showing that the inclusion from Theorem \ref{The105} d) is not equality.
\end{example}

\end{document}